\documentclass[letterpaper]{article}

\usepackage[ruled,vlined,linesnumbered]{algorithm2e}

\usepackage{incl/uai2019}
\usepackage[margin=1in]{geometry}

\usepackage{times}


%
%
%
%
%
%


\newcommand{\sectionc}[1]{\section{\uppercase{#1}}}


\usepackage{pdfpages}

\usepackage{float}



\usepackage{pgfplots}
\usepackage{paralist}
\usepackage{enumitem}

\usepackage[utf8]{inputenc}
\usepackage[english]{babel}
\usepackage{amsmath,amssymb}
\usepackage{amsthm}
\usepackage{tikz}
\usetikzlibrary{arrows,backgrounds,fadings}
\usetikzlibrary{bayesnet}
\usepackage[outline]{contour}  
\usepackage{natbib}
\usepackage{hyperref}  
\usepackage{thmtools}  

\usepackage{varwidth}  

\usepackage{hhline}



\usepackage{tabularx}
\usepackage{tabulary}
\usepackage{subcaption}
\usepackage{nameref}

\usepackage{xspace} 


\newcommand{\stress}{\textbf}

\newcommand{\defi}{\emph}
\newcommand{\sdefi}[1]{\textbf{\emph{#1}}}
\newcommand{\socalled}{\emph} 
\newcommand{\termtech}{} 
\newcommand{\todo}[1]{\textcolor{red}{#1}}
\newcommand{\remark}[1]{}
\newcommand{\reasonwhy}[1]{}
\newcommand{\notsure}[1]{}
\newcommand{\idea}[1]{}
\newcommand{\maybe}[1]{}

\usepackage{soul,xcolor}
\setstcolor{red}

\newcommand{\michelcmt}[1]{\textcolor{orange}{[\textit{michel}: #1]}}


\newtheorem{Theorem}{Theorem}

\newtheorem{Proposition}{Proposition}
\newtheorem{Claim}{Claim}

\newtheorem{Goal}{Goal}

\newtheorem{Definition}{Definition}

\newtheorem{Corollary}{Corollary}
\newtheorem{Remark}{Remark}

\newtheorem{Asm}{Assumption} 

\newtheorem{Setting}{Setting}
\newtheorem{Example}{Example}


%
%



%

\DeclareMathOperator*{\N}{\mathbb{N}}
\DeclareMathOperator*{\R}{\mathbb{R}}


\DeclareMathOperator{\dom}{range}
\DeclareMathOperator{\alg}{\sigma-alg} 



\newcommand\ind{\protect\mathpalette{\protect\independenT}{\perp}}
\def\independenT#1#2{\mathrel{\rlap{$#1#2$}\mkern2mu{#1#2}}}

\newcommand{\E}{\mathbb{E}}

\newcommand{\doc}[1]{\mathit{do}(#1)}

\newcommand{\var}{\mathrm{var}}
\newcommand{\range}{\mathrm{range}}

\newcommand{\vn}[1]{\mathit{#1}} 
\newcommand{\an}[1]{{\it{#1}}} 

\newcommand{\meas}{\mathcal{M}_1}


\definecolor{light-gray}{gray}{0.75}

\usetikzlibrary{arrows,shapes,plotmarks}

\usetikzlibrary{arrows.meta}

\tikzset{>=stealth'} 
\tikzstyle{graphnode} = [circle,draw=black,minimum size=22pt,text centered,text width=22pt,inner sep=0pt] 
\tikzstyle{var}   =[graphnode,fill=white]
\tikzstyle{const}   =[graphnode,fill=white,draw=none]
\tikzstyle{hid}   =[circle,fill=none,draw=gray,text=black,inner sep=0]
\tikzstyle{phantom}   =[graphnode,fill=white,draw=none,text=white]
\tikzstyle{halfhid}   =[graphnode,fill=light-gray,draw=black,text=white]
\tikzstyle{obs}   =[fill=none,draw=none]
\tikzstyle{des}   =[rectangle,fill=none,draw=none,text=gray,inner sep=0,outer sep=0]
\tikzstyle{mech}   =[rectangle,fill=none,draw=black]
\tikzstyle{sel}   =[rectangle,fill=none,draw=black,minimum width=0.7cm,minimum height=0.7cm]
\tikzstyle{selection}   =[rectangle,draw=none,fill=black,minimum size=5pt,inner sep=0pt,outer sep=0pt]
\tikzstyle{fac}   =[rectangle,draw=black,fill=black!25,minimum size=5pt]
\tikzstyle{facprior} =[rectangle,draw=black,fill=black,text=white,minimum size=5pt]
\tikzstyle{edge}  =[draw=white,double=black,thick,-]
\tikzstyle{prior} =[rectangle, draw=black, fill=black, minimum size=5pt, inner sep=0pt]
\tikzstyle{dirprior} = [circle, draw=black, fill=black, minimum size=5pt, inner sep=0pt]

\contourlength{1.5pt}
\newcommand{\nt}[1]{{\small\color{nt}#1}}  

\newlength\figheight
\newlength\figwidth

\newlength\figureheight
\newlength\figurewidth

\newif\iffinal 

\iffinal
\newcommand{%
\beginpgfgraphicnamed{-external}%
\input{.tikz}%
\endpgfgraphicnamed%
}[1]{%
\input{#1.tikz}%
}
\else
\newcommand{%
\beginpgfgraphicnamed{-external}%
\input{.tikz}%
\endpgfgraphicnamed%
}[1]{%
\beginpgfgraphicnamed{#1-external}%
\input{#1.tikz}%
\endpgfgraphicnamed%
}
\fi


\usepackage{diagbox}

\usepackage{slashbox,pict2e}
\usepackage{multirow}


\usepackage{pgfplots}
\usetikzlibrary{arrows,shapes,plotmarks,calc}


\newenvironment{cenum}{\begin{enumerate}[topsep=0pt, partopsep=0pt, itemsep=2pt, parsep=2pt, wide=\parindent  
	]}{\end{enumerate}}
\newenvironment{citem}{\begin{itemize}[topsep=0pt, partopsep=0pt, itemsep=2pt, parsep=2pt,
	wide=\parindent  
	]}{\end{itemize}}
\newenvironment{iitem}{\begin{inparaitem}}{\end{inparaitem}}

\renewcommand{\paragraph}[1]{{\bf #1}}  
\newcommand{\parag}[1]{{\bf #1}}  

\newcommand{\spar}{\\[2pt]}  



\newcommand{\lpred}{L^{\text{pred}}}
\newcommand{\lpoint}{L^{\text{point}}}

\newcommand{\lne}{L^{\text{Nash}}}
\newcommand{\lpointt}{L^{t, \text{point}}}

\newcommand{\lpredt}{L^{t, \text{pred}}}
\newcommand{\lnet}{L^{t, \text{Nash}}}
\newcommand{\lpointpredt}{\lpointt}
\newcommand{\ltildepointt}{\tilde{L}^{t, \text{point}}}

\newcommand{\gat}{G^{\text{small}}}
\newcommand{\gnonat}{G^{\text{large}}}

\newcommand{\sur}{Y} 

\newcommand{\game}{G}
\newcommand{\gamegen}{G}

\newcommand{\nonagg}{small-scale setting\xspace}
\newcommand{\aggr}{large-scale setting\xspace}
\newcommand{\Nonagg}{Small-scale setting\xspace}
\newcommand{\Aggr}{Large-scale setting\xspace}

\newcommand{\mr}{M^{\text{dyn}}}

\newcommand{\bg}{benchmark game\xspace}

\newcommand{\Acov}{V}
\newcommand{\Usig}{W}
\newcommand{\Uact}{B}
\newcommand{\acov}{v}
\newcommand{\usig}{w}
\newcommand{\uact}{b}
\newcommand{\slots}{K}
\newcommand{\Slots}{K}
\newcommand{\slot}{k}
\newcommand{\users}{I}
\newcommand{\usersalg}{\mathcal{I}}
\newcommand{\obsmech}{\bar{Y}}

\newcommand{\utb}{\bar{U}} 
\newcommand{\uta}{\tilde{U}} 

\newcommand{\msmall}{M^{\text{small}}}
\newcommand{\mlarge}{M^{\text{large}}}

\newcommand{\mpara}{x} 
\newcommand{\Mpara}{X} 
\renewcommand{\game}{\gnonat}

\newcommand{\tth}{X}
\newcommand{\bth}{x}

\definecolor{rga}{HTML}{731D0C}
\definecolor{lightblue}{HTML}{EEF2F6}  
\definecolor{philcol1}{HTML}{BCD7DB}
\newcommand{\rga}[1]{\color{black}{#1}}
\renewcommand{\rga}[1]{}

\definecolor{nt}{rgb}{0, 0, 0}
\definecolor{des}{rgb}{0, 0, 0}
\definecolor{base}{rgb}{0, 0, 0}
\definecolor{ntx}{rgb}{0, 0, 0} 


\title{Coordinating users of shared facilities via data-driven predictive assistants and game theory\thanks{\,\, Extended version, incl.\ supplement, of a publication at \textit{35th Conf.\ on Uncertainty in Artificial Intelligence (UAI), 2019}}}

\author{ 
{\bf Philipp Geiger\textsuperscript{a,b}, Michel Besserve\textsuperscript{a,c}, Justus Winkelmann\textsuperscript{d},  Claudius Proissl\textsuperscript{a}, Bernhard Sch\"olkopf\textsuperscript{a}}   \\
philipp.geiger@tuebingen.mpg.de\\ 
\textsuperscript{a}Max Planck Institute for Intelligent Systems, \textsuperscript{b}Bosch Center for Artificial Intelligence, \\\textsuperscript{c}Max Planck Institute for Biological Cybernetics, \textsuperscript{d}Bonn Graduate School of Economics
}





\newcommand{\extrefs}[1]{\ref{#1}}

\newcommand{\extref}[1]{\ref{#1}} 

\begin{document}

\maketitle

%
%
%
%

\begin{abstract}
We study data-driven assistants that provide congestion forecasts to users of shared facilities (roads, cafeterias, etc.), to support coordination between them, and increase efficiency of such collective systems. 
Key questions are: (1) when and how much can (accurate) predictions help for coordination, and (2) which assistant algorithms reach optimal predictions?

First we lay conceptual ground for this setting where user preferences are a priori unknown and predictions influence outcomes.
Addressing~(1), we establish conditions under which self-fulfilling prophecies, i.e., ``perfect'' (probabilistic) predictions of what will happen, solve the coordination problem in the game-theoretic sense of selecting a Bayesian Nash equilibrium (BNE). 
Next we prove that such prophecies exist even in large-scale settings where only aggregated statistics about users are available. This entails a new (nonatomic) BNE existence result. 
Addressing~(2), we propose two assistant algorithms that sequentially learn from users' reactions, together with optimality/convergence guarantees. We validate one of them in a large real-world experiment.

\end{abstract}

\sectionc{Introduction}
\label{sec:intro}

Data-driven interventions on social/economic systems are on the rise, but it remains a challenge to understand when and how they can improve such systems in terms of peoples' actual utilities and overall resource efficiency. 
Here we consider central \emph{predictive coordination assistants}, that, 
in the simplest case, work as follows: The assistant provides a congestion forecast $A$ to users of some facility, based on past observations.
The users trust $A$ to be a good forecast, and individually optimize their facility use based on it, e.g., their arrival time slot, to coordinate and avoid crowds. Thereby they generate an observable outcome $Y$, which $A$ is a forecast for. 
In particular, forecast $A$ \emph{influences} outcome $Y$. 
Versions of such assistants exist for roads, trains, swimming pools, etc.\ \citep{googlepopulartimes,franceforecasts,dbtravel}
, or, in our experiment, a cafeteria, see Figure \ref{fig:predvis}. 
\maybe{Versions of such assistants are, e.g., ``Google Maps Popular Times'' \citep{googlepopulartimes} for facilities such as swimming pools%
, or traffic forecasts \citep{franceforecasts}. }

%
%
%


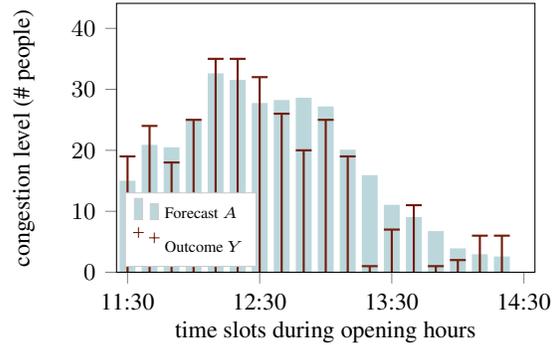
\begin{figure}[t]
\centering
\setlength{\figwidth}{0.9\columnwidth}
\setlength{\figheight}{0.65\columnwidth}
\begin{tikzpicture}
\pgfplotsset{every tick label/.append style={font=\small}}
\begin{axis}[
axis on top, 
xlabel={time slots during cafeteria's opening hours},
xlabel style={font=\small},
ylabel={congestion level (\# people)},
ylabel style={font=\small},
xmin=35, xmax=73, 
ymin=0, ymax=44.1,
width=\figwidth,
height=\figheight,
xtick={36,48,60,72}, 
xticklabels={11:30,12:30,13:30,14:30}, 
tick style={font=\small},
tick align=outside,
tick pos=left,
x grid style={lightgray!92.02614379084967!black},
y grid style={lightgray!92.02614379084967!black},
legend style={draw=white!80.0!black,font=\tiny,at={(0.02,0.02)},anchor=south west},
legend entries={{Forecast $A$}, {Outcome $Y$}},
legend cell align={left},
ybar, 
bar shift = 0,
bar width=6
]
\addlegendimage{ybar legend, no markers, fill=philcol1, draw=none}
\addlegendimage{only marks, rga, mark=+} 

\addplot [ybar, bar width=6, fill=philcol1, draw=none] 
table {%
36 15.014359489998
38 20.892448421147
40 20.5011196463734
42 24.9444404594
44 32.6210763466449
46 31.5565919887034
48 27.7526261615357
50 28.2412338517731
52 28.6304650711178
54 27.1907309029328
56 20.1022557322115
58 15.9256948309117
60 11.0628723721185
62 9.04827248991681
64 6.75085715910207
66 3.91672984287326
68 2.95118235627112
70 2.58537888064649
};

\addplot +[ycomb, rga, mark=-, mark size=3, thick]
table {%
	36 19
	38 24
	40 18
	42 25
	44 35
	46 35
	48 32
	50 26
	52 20
	54 25
	56 19
	58 1
	60 7
	62 11
	64 1
	66 2
	68 6
	70 6
};

\end{axis}

\node at ({$(current bounding box.south west)!0.5!(current bounding box.south east)$}|-{$(current bounding box.south west)!0.98!(current bounding box.north west)$})[
  scale=0.6,
  anchor=north,
  text=black,
  rotate=0.0
]{ };
\end{tikzpicture} 
\caption{Example of $A \in \R^{18}$, the assistant's (point) forecast, and outcome $Y \in \R^{18}$, in our cafeteria experiment ($A$ updated between but not within days, Section~\ref{sec:exp}).}
\label{fig:predvis}
\end{figure}

%
%
\parag{Main goals and contributions:}
We aim at
(1)~understanding to what extent optimally accurate assistant predictions can help coordination between users (Goal \ref{quest:obj}), and 
(2)~designing sequential assistant algorithms that achieve optimal predictions (Goal \ref{quest:algo}). 
Our contributions:
\begin{citem}
\item 
Introducing new concepts for this setting, we analyze when the assistant achieving a ``perfect'' (probabilistic) prediction $A$ of $Y$, i.e., a ``self-fulfilling prophecy'', is equivalent to ``solving'' coordination in the sense of selecting a Bayesian Nash equilibrium (BNE) (Theorem~\ref{prop:predne}).
\item 
We establish conditions under which such a prophecy exists even in large-scale settings with only population-level aggregated user data (Theorem~\ref{thm:ex}), using the Leray-Schauder-Tychonoff fixed point theorem. This entails a new nonatomic game BNE existence result (Corollary~\ref{cor:ex}). 
%
\item 
We propose learning assistant Algorithms~\ref{alg:expodamp} and \ref{alg:partpredsk} (controllers), for large-/small-scale settings, with optimality/convergence guarantees (Propositions~\ref{prop:expod} and \ref{prop:consistresp}).
\item We report positive evaluation of Algorithm~\ref{alg:expodamp} in a large-scale real-world cafeteria experiment (Section \ref{sec:exp}).
\end{citem}
\maybe{The remainder is organized as follows: Sections~\ref{sec:problem} and \ref{sec:dyndef} contain basic model definitions for one-stage and dynamic case, respectively, and Sections~\ref{sec:rem} and \ref{sec:conc} discussion of our results and conclusions, respectively.
\michelcmt{organization of remaining parts not super instructive, would remove.}}

\paragraph{Closest related research overview:}
%
%
Within \socalled{game theory}, 
dynamics/equilibria of multiple agents are studied that \socalled{learn} about each other by repeatedly interacting, but without central assistant \citep{shoham2008multiagent}.
Besides this, the following game-theoretic work usually assumes that agents reason fully rationally based on their own a priori given beliefs about other agents,
instead of using a predictive assistant informed by past behavioral data:
\socalled{Congestion games} \citep{nisan2007algorithmic} formalize coordination in certain shared facilities.
\maybe{Based on game theory, \socalled{mechanism design} \citep{nisan2007algorithmic,kearns2014mechanism} aims at designing (allocation) mechanisms that maximize social welfare (which is defined in terms of agent's a priori unknown preferences), in spite of agents being self-interested, by using incentives.}
\emph{(Allocation) mechanisms} are designed \citep{nisan2007algorithmic} that maximize social welfare (which is defined in terms of agent's a priori unknown preferences), in spite of agents being self-interested, by using incentives.
Unlike our assistant, these mechanisms often fully control the outcome.
And we consider ``solving coordination'' in game-theoretic (equilibrium selection) rather than in social welfare terms.
Beyond game theory, certain \socalled{smart cities} research \citep{marevcek2015signalling} uses a \emph{control-theoretic} approach for congested facilities, but they fix an objective that does not in general account for users' individual, a priori unknown preferences. 
For further related work, see Section~\ref{sec:addrelatedwork}, and Section~\extref{sec:supprelated}. 

\sectionc{Preliminaries and setting} 
\label{sec:problem}
\label{sec:scope}

\label{sec:model}
\label{sec:inducedgame}


\paragraph{Notation:}
\label{sec:pre}
For a vector $\uact$, 
$\uact_i$ or $[\uact]_i$ is the $i$-th component, 
$\uact_{-i}$ means dropping $b_i$, and 
$(\uact_i, \uact_{-i})$ reads $\uact$.
For a variable $Z$, $\dom_Z$ denotes the (implicitly given) range.
%
%
\maybe{causal model commented below}

\subsection{General setting and assistant-based system} 

Let us first introduce 
the general users' decision problem. 
We leave it fairly abstract so that later on we can consider different forms of decision making scenarios based on it.

\newcommand{\genset}{general setting}
\newcommand{\Genset}{General setting}

\begin{Setting}[\sdefi{General (one-stage) setting}] 
\label{set:gen}
There is a finite set $\Slots = \{0, \ldots, |K|-1\}$ of \defi{slots}\footnote{$K$ can be e.g., several facilities, or time slots in one facility.}, \maybe{``Slot'' can mean one time slot, or one of several facilities.}
and a set $\users$, interpreted as \defi{users} (here and in Section \ref{sec:atom}) or \defi{types of users} (in Section \ref{sec:nonat}), respectively. 
%
%
Each user $i \in \users$:
\begin{citem}
	\item receives a (private) \emph{signal} $\Usig_i$, 
	\item as (private) \defi{action $\Uact_i$} chooses a slot in $\Slots$, and
	\item experiences (private) \defi{utility} $U_i$ he wants to maximize. 
\end{citem}
Let  $\Usig = (\Usig_i)_{i \in \users}$,  $\Uact = (\Uact_i)_{i \in \users}$ and $U = (U_i)_{i \in \users}$.
Besides the private signals, there is a \defi{publicly available signal} $\Acov$, and some underlying \defi{(latent) state} $X$.
And there is a \defi{publicly observable outcome} $Y = \obsmech(X, \Uact)$, for some function $\obsmech$.
\footnote{This models the fact that actions $B$ may no be observed publicly, but just, say, some stochastic aggregation of them.} 
We assume there is a ``true'' distribution $P(X, \Acov, \Usig)$.
If not stated otherwise, we assume that all users $i$ are \sdefi{inference-assistable}, i.e.,
\begin{align}
U_i 
= \uta_i(\Usig_i, \Uact_i, h_i(\obsmech(X, \Uact))), \label{eqn:uta}
\end{align}
for (continuous) functions $\tilde{U}_i, h_i$ such that $h_i(\obsmech(x, (\uact_i, \uact_{-i})))$ does not depend on $\uact_i$, for all $\uact, x$.%
\footnote{The intuition behind this constraint on the utility functions 
	is that users' decision making can be discerned into (1) an optimization performed by the users and (2) the task of predicting $Y$ which can be ``outsourced'' to an assistant.} 
And let all users $i$ be \sdefi{assistant-separable}, i.e., $h_i(Y) \ind \Usig_{i} | \Acov$ (for any possible mechanism that generates $B$ from $V, W$).\footnote{This means, roughly, that the users do not know more about each other than is contained in the public $\Acov$.}
%
%
\end{Setting}

\begin{figure}[t]
\contourlength{1pt}
\centering 
\tikzstyle{stdarrow}=[-{Latex[scale=1.2]}]
\tikzstyle{darrow}=[-{Latex[scale=1.2]}] 
\tikzstyle{oarrow}=[-] 
\begin{tikzpicture}[scale=1]



\node[obs, align=center] at (0, 0) (X) {\nt{state}\\$X$};

\node[obs, align=center] at (-2, 0) (W) {\nt{pub.\ signal}\\$\Acov$};

\node[obs, align=center] at (2, 0) (Theta) {\nt{priv.\ signals}\\$(\Usig_i)_{i \in \users}$}; 

\node[obs, align=center] at (-2, -2.6) (A) {\nt{forecast}\\$A$};

\node[obs, align=center] at (2, -2.6) (C) {\nt{priv. actions}\\$(\Uact_i)_{i \in \users}$};

\node[obs, align=center] at (0, -3.9) (Y) {\nt{pub.\ outcome}\\$Y$}; 

\node[sel,align=center,minimum height=1cm,minimum width=1.6cm,fill=lightblue]  at (-2, -1.3) (pi) {\nt{policy}\\$\pi$};
\node[sel,align=center,minimum height=1cm,minimum width=1.6cm,fill=lightblue]  at (2, -1.3) (sigma) {\nt{``best resp-}\\\nt{onse to $A$''}}; 

\draw[stdarrow] (X) to (W);
\draw[stdarrow] (X) to (Theta);
\draw[stdarrow] (W) to (pi);
\draw[stdarrow] (Theta) to (sigma);
\draw[stdarrow] (pi) to (A);
\draw[stdarrow] (A) to (1, -2.6) to (1, -1.3) to (sigma);
\draw[stdarrow] (sigma) to (C);
\draw[stdarrow] (C) to (2, -3.9) to (Y); 
\draw[stdarrow] (X) to (Y);

\draw[darrow,color=gray,dashed] (Y) to (-3, -3.9) to (-3, -1.3) to (pi); 
\draw[darrow,color=gray,dashed] (-3.6, -1.3) to (pi);
\draw[darrow,color=gray,dashed] (0, 1) to (X);
\draw[oarrow,color=gray,dashed] (A) to (-3, -2.6);
\draw[oarrow,color=gray,dashed] (W) to (-3, 0) to (-3, -1.3);


\draw [xshift=0pt,yshift=0pt,draw=none] 
(-3.3,0.8) -- (-0.7,0.8) node [midway,yshift=0.3cm] 
{\nt{\bf{\color{rga}assistant}}}; 
\draw [xshift=0pt,yshift=0pt,draw=none] 
(3.3,0.8) -- (0.7,0.8) node [midway,yshift=0.3cm] 
{\nt{\color{rga}\bf{users}}};

\begin{scope}[on background layer]

\draw[fill=lightblue,draw=none,path fading=south] (-3.3, 1.5)  
rectangle (-0.7, -4.4);
\draw[fill=lightblue,draw=none,path fading=south] (3.3, 1.5)
rectangle (0.7, -4.4);

\end{scope}



\end{tikzpicture}
\caption{(Causal) diagram of the \emph{assistant-based system~$M$} (without $U$). The dashed gray arrows indicate the dynamic extension $\mr$ we will introduce in Section \ref{sec:rep}.} 
\label{fig:ts}
\label{fig:all_DAGs}
\label{fig:dag}
\end{figure}
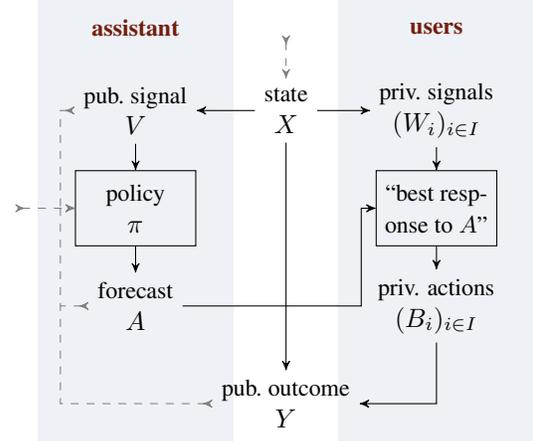

Setting \ref{set:gen} leaves open how users reason/decide. 
Our main object of study is a system that enriches this setting: user $i$ chooses $B_i$ that maximizes her expected utility, given $Y$ is distributed according to a central assistant's forecast: 

\begin{Definition}[\sdefi{Assistant-based 
	system $M$}]
\label{def:m}
Based on Setting \ref{set:gen}, or any restricted version, let the \defi{assistant-based (one-stage) 
	system $M$} be defined by the following objects and assumptions additional to Setting \ref{set:gen}, as depicted by the (causal) Bayes net \citep{Pearl2000} in Figure \ref{fig:dag}:
There is an assistant that takes public signal $\Acov$ as input and outputs $A$, a probabilistic \emph{forecast} for the public outcome $Y$,
based on \defi{policy} $\pi$, i.e.,  $A= \pi(\Acov)$. That is, $A$ is a distribution over $Y$ (later we also consider point forecasts).
%
User $i \in \users$ takes forecast $A$ (besides her private signal $W_i$) as input, and acts \sdefi{assistant-best-respondingly}, i.e., 
\begin{align}
\label{eqn:mbr}
B_i \in \arg\max_{\uact_i'} \E_{Y' \sim A}   ( \uta_i(\Usig_i, \uact_i', h_i(Y')) 
\end{align}
(breaking ties via $K$).
A joint $P_M(X, \Acov, \Usig, A, \Uact, Y, U)$ is induced by all the above (measurable) equations and $P(X, \Acov, \Usig)$.
We may write $P_{M, \pi}$ and $\E_{M, \pi}$ to make the dependence on $\pi$ explicit.
$Y$ is observed, but the specific $P(X, \Acov, \Usig)$ and $\tilde{U}_i$'s are a priori unknown, and $X, \Usig, B$ and utilities $U$ 
are unobserved by the assistant. 
%
%

\end{Definition}


\subsection{Game-theoretic tools to characterize efficiency} 
\label{sec:pregame}

We want to analyze the degree of efficiency that assistant-based coordination can achieve.
For this, we now define what a \emph{``solution''} of the coordination problem would be (a BNE), accounting for users' preferences.
This is based on an \emph{idealized}, assistant-free version of Setting \ref{set:gen} (a Bayesian game), where users $i$ have \emph{informed priors and unlimited inference abilities themselves}, using $W_i$ and $V$ as input.
Then, for any user behavior that arises in the assistant-based system, we can check if it is 
(or rather: corresponds to) such a solution.
For background on game theory and the Bayesian game definition we use, see Section \extref{sec:gameback}.
%
%
%
%
%

\begin{Definition}[\sdefi{Benchmark (assistant-free) game $G$}]
\label{def:g}
Based on Setting~\ref{set:gen}, or any restriction of it, let the \defi{\bg $G$} be defined as the Bayesian game canonically associated to this setting:
Each user $i \in \users$ is a player who has:
\begin{iitem}
	\item[] signal $(\Usig_i, \Acov) \in \dom_{(\Usig_i, \Acov)}$,
	\item[] (measurable) utility function $\dom_{(X, W_i, \Uact)}  \to \R$ given by Eq.~\ref{eqn:uta}, 
	\item[] and action $\Uact_i \in \Slots$.
	The utility functions are common knowledge and $P(X, \Acov, \Usig)$ is the common prior. 
\end{iitem}
%
%
\end{Definition}

As usual, a \sdefi{(pure) strategy profile} for $G$ is a tuple $s=(s_i)_{i \in \users}$ of \defi{(measurable, pure) \textbf{strategies}} $s_i : \range_{(\Usig_i, \Acov)} \to K$\footnote{I.e., the strategy $s_i$ maps player $i$'s signal to her action}, $i \in \users$. 
%
%
%
A strategy profile $s$ is a \textbf{\defi{Bayesian Nash equilibrium (BNE)}} of $\game$, if 
\begin{align}
%
%
s_i(\usig_i, \acov) \in \arg \max_{\uact_i} \E_{G, (\uact_i, s_{-i})}(U_i | \usig_i, \acov),
\end{align}
for (almost) all $i, \usig_i, \acov$;
with $U_i$ as in Eq.~\ref{eqn:uta}, and $\E_{G, (\uact_i, s_{-i})}$ the expectation under $P_{G, (\uact_i, s_{-i})}(X, \ldots, U)$ obtained by ``plugging'' strategy profile $(\uact_i, s_{-i})$ into game $G$ ($\uact_i$ here means the constant strategy).%
\footnote{I.e., each player's strategy is a best response to the others.} 
\maybe{for this to be defined we need that $U_i$ is guaranteed to be a RV}
We call the BNE \defi{strict} if the argmax is unique. 
%
%
%

To relate $M$ to $G$, given an assistant policy $\pi$ of the assistant-based system $M$,
we define the \textbf{\defi{corresponding strategy profile}} $s_\pi$ by the composition of $\pi$ and users' subsequent (deterministic) ``best-response'' action, i.e.,
\begin{align}
[s_\pi]_i(\usig_i, \acov) := \E_{M, \pi}(B_i|\usig_i, \acov),\ \text{ for all $i, \usig_i,  \acov$}. \label{eqn:s}
\end{align} 
%
%
\maybe{(if measurable).} Conversely, given a strategy profile $s$ of the \bg $G$, we define the \textbf{\defi{corresponding assistant policy}}%
\footnote{I.e., the assistant as forecast takes the distribution of outcome $\sur = \obsmech(X, \Uact)$ given $\Acov=\acov$, under $G,s$.}%
\begin{align}
\pi_s(\acov) := P_{G, s}(\sur|\acov),\ \text{ for all $\acov\maybe{ \in \dom_\Acov}$.} \label{eqn:cp}
\end{align}

\subsection{Objective functions}
\label{sec:preobjectives}

%
%
%
%
%
%
We consider the following two objective functions for the assistant's policy $\pi$, where,
as we will see, the former can be seen as a directly measurable ``proxy'' to the latter: \maybe{ more principled objective which is defined in terms of the hidden preferences:}
%
%
%
%
\begin{citem}
\item \textbf{\defi{(probabilistic) prediction accuracy objective (loss)}}:
\begin{align}
\lpred_\pi := \E \left( d\left(P_{M, \pi}\left(Y|\Acov\right), \pi\left(\Acov\right)\right) \right) \text{, for all $\pi$,}
\end{align}
with $d(\cdot,\cdot)$ some arbitrary but fixed statistical distance which is 0 iff both distributions coincide;
\item \textbf{\defi{equilibrium selection objective%
		\footnote{Equilibrium selection is a game-theoretic formulation of solving coordination \citep{nisan2007algorithmic}. Clearly, equilibria can still be inefficient in terms of social welfare (see Section \ref{sec:rem}).}}}: $\pi$ is optimal iff the corresponding strategy profile $s_\pi$ (Eq.~\ref{eqn:s}) is a BNE of the \bg $G$.
%
\maybe{or maybe a more quantitative version? like $\sum_i (U_i)_\pi - (U_i)_{closest BNE}$.}
\maybe{and maybe use *partial loss funcitons* instead, e.g., the partial prediction accuracy just meaning that self-fulfilling prophecies are preferred over everything else, but not the preference relations amonf the other outcomes}
%
\end{citem}
We call a policy $\pi$ that tries to optimize $\lpred_\pi$ loosely a \defi{(predictive coordination) assistant}, and a $\pi$ that achieves $\lpred_\pi=0$ a \textbf{\defi{(formally) self-fulfilling prophecy (policy)}}.

\sectionc{The utility of predictions for coordination -- analysis} 
\label{sec:pred-sol}
\label{sec:obj}
\label{sec:rel_epistemic}

%



In this section, we pursue the following goal,
for which the one-stage setting we introduced in Section~\ref{sec:pre} is sufficient (we will introduce a \emph{repeated} version in Section~\ref{sec:defdyn}).

\begin{Goal} 
\label{quest:pred_acc}
\label{quest:obj}
Understand the conditions when, and the degree to which, assistants, that achieve $\pi \in \arg\max_{\pi'} \lpred_{\pi'}$, help solve the problem of coordination between users of facilities (here: in terms of equilibrium selection). 
\end{Goal}

\subsection{Characterization step in general setting}
\label{sec:chargen}
\label{sec:social_welfare_remark}

\begin{Theorem}[Self-Fulfilling Prophecy Characterization]
\label{prop:predne}
\label{thm:predne}
\label{prop:argmin_ne}
We have, in the general setting (Setting \ref{set:gen}, with all users being inference-assistable and assistant-separable): 
\begin{citem}
	\item If the assistant policy $\pi$ in the assistant-based system $M$ (where all users are assistant-best-responding) is a self-fulfilling prophecy (i.e., $\lpred_{\pi} = 0$)
	, then the corresponding strategy profile $s_\pi$ \maybe{(Eq.\ \ref{eqn:s})} is a Bayesian Nash equilibrium (BNE) of the \bg $\game$.
	\maybe{If $\lpred_\pi=0$, then $\lne_\pi=0$.}
	\item Conversely, \maybe{in the case of finite ranges, }if the strategy profile $s$ is a strict BNE of the \bg $\game$, then the corresponding assistant policy $\pi_s$ is a self-fulfilling prophecy. 
	\maybe{\item In particular, if a strict BNE exists in $G$, then: if $argmin .... L_\pi=0$, then $s_\pi$ is essentially a BNE of $\game$.}
\end{citem}
\end{Theorem}

The proof 
is in Section~\extref{sec:pr_argmin_ne}.
Since in Section \ref{sec:pre} we were very brief regarding some of the (measurability) assumptions and definitions underlying the theorem,
we give a \emph{detailed elaboration} of these assumptions and definitions, and their soundness, in Section \extref{supp:sec:addnas}.
For a \emph{justification} of some of the theorem's assumptions see Section~\ref{sec:nerem}.
Note that Setting \ref{set:gen} is formulated pretty \emph{generally}: the slots $K$ can be any set of options the users have. $K$ can be time slots in one shared facility, like a road section; or $K$ can be several facilities that provide the same service, say citizen centers in a city; or it can be a combination, i.e., time slots in several facilities. (Our ``slot'' is similar to ``facility'', or, to some extent, ``feasible combination of facilities'', in congestion games.) The main limitation may be seen in the assumptions of inference-assistability and assistant-best-responding, saying that users can meaningfully evaluate the utility of their choices based on only $Y$, which $A$ is a forecast for.

\subsection{Existence step in \nonagg}
\label{sec:atom}
\label{sec:exsmall}


\renewcommand{\game}{\gat}


In Theorem~\ref{prop:predne}, we characterized the type of solution (a BNE of game $G$) that is implemented by the assistant-based system (with significantly lower requirements on users' knowledge/inference capacities than in the game $G$),
%
\emph{if} the assistant reaches a self-fulfilling prophecy policy (``characterization step''). 
We established this result for the general setting (Setting~\ref{set:gen}).
As second step towards Goal \ref{quest:obj}, it remains to understand when such a self-fulfilling prophecy \emph{exists} (``existence step'').
%
This second step we perform separately for two instructive subsettings of the general setting, which are each still reasonably general.
As a warm-up exercise, we start in a setting where we can easily build on game-theoretic results -- because it corresponds to a classical finite Bayesian game.
\maybe{ for the corresponding game.}


\subsubsection{Introducing the setting}

\begin{Setting}[\sdefi{\Nonagg}]
\label{set:nonagg}
As a restricted form of Setting \ref{set:gen}, consider the following \defi{\nonagg}:
$\users = \{1, \ldots, n\}$ is finite and we interpret its elements here as \emph{users} (not types), and the individual actions of the users are directly publicly observable, i.e., $Y=\Uact$, 
and $h_i(B) = B_{-i}$ in Eq.~\ref{eqn:uta}.
$X, V, W$ all have finite range. 
\end{Setting}

We may write $\msmall$ and $\gat$ to denote assistant-based system (Definition \ref{def:m}) and \bg (Definition \ref{def:g}), respectively, canonically associated to this particular \nonagg.
\maybe{Let $\gat$ denote the induced \bg $G$ (Definition \ref{def:g}) for this setting.} 
%
\begin{Corollary}
\label{cor:at}
Setting~\ref{set:nonagg} is a special case of Setting \ref{set:gen}.
In particular, it satisfies the conditions of Theorem \ref{prop:predne} and hence the theorem's implications hold for $M = \msmall$ and $\gamegen = \gat$.
%
\end{Corollary}

\subsubsection{Self-fulfilling prophecy existence}

To understand the conditions under which a self-fulfilling prophecy exists, based on the second part of Theorem \ref{thm:predne} (or rather: Corollary \ref{cor:at}) it is enough to understand when a strict BNE of $G$ exists.
But in the current \nonagg, $\gamegen = \gat$ is the classical finite (Bayesian) game, which is well understood. 
For example, \citet[Theorems 3, 4]{harsanyi1973games} showed that when assuming that the players' utilities are contaminated by a small additive noise, then there exists a strict equilibrium with probability one.
Furthermore, \citet{bilancini2016strict} establish conditions under which all BNE are (essentially) strict, which entails existence of such strict BNE when combined with general BNE existence results.

\maybe{the ``likelihood'' of existence of a strict Nash equilibrium of a game has been studied \citep{harsanyi1973games, govindan2003short}. \todo{In one sentence, the results state that ...} }
\maybe{So for the current \nonagg the second step of our approach towards Goal \ref{quest:obj} (as outlined at the beginning of the current Section \ref{sec:obj}) is performed by the mentioned work (together with the second part of Theorem \ref{thm:predne}).}



\subsection{Existence step in \aggr}
\label{sec:nonatom}
\label{sec:nonat}
\label{sec:exlarge}

While the above \nonagg is easy to understand, it has significant limitations: first, the users' actions $\Uact$ have to be fully observable for the (loss $\lpred$ of the) assistant, which is often impossible due to data privacy regulations; and second, there has to be a fixed set of unique users, while in practice the set of users may change of course. 
Therefore we perform the second step towards Goal \ref{quest:obj} 
also for the following \aggr (again a subsetting of the general setting of Section \ref{sec:scope}, different from the \nonagg). It corresponds to \termtech{nonatomic games} \citep{schmeidler1973equilibrium},
%
and is mathematically more involved, but abstracts away from individual users and in particular only requires a cross-user aggregate of actions to be publicly observed.

\subsubsection{Introducing the setting}



\label{sec:aset}


\begin{Setting}[\sdefi{\Aggr}]
\label{set:agg}
As a restricted form of Setting \ref{set:gen}, consider the following (aggregated) \defi{\aggr}:
%
\begin{citem}
	\item For simplicity, we assume there are only two slots, $K=\{0,1\}$, and that $\Acov, \Usig$ are constant.\footnote{We will prove the main results, Theorem \ref{thm:ex}, for an \emph{arbitrary number $|K|$ of slots} though. The extension to stochastic $\Acov, \Usig$ is less obvious due to measure-theoretic issues.}
	We assume $\users = [0, 1]$ with the Borel sets as $\sigma$-algebra $\usersalg$, and interpret $i \in \users$ as a \emph{type} of user with a certain form of utility function and private signal (similar as \citet{kim1997existence}). 
	Let $\dom_\Uact$ (i.e., the set of possible joint user actions $(B_i)_{\in \users}$) be the set of $\{0, 1\}$-valued Lebesgue-measurable functions on $\users$\maybe{ and $\alg_B$ the Borel $\sigma$-algebra for $L^2(\users)$. OR: the $\sigma$-algebra generated by the singletons.}.
	\item Let $\dom_{Y_1} = [0,1]$ and $Y_1 := \int \Uact_i r(i|X) d i $, for $(r(\cdot|\mpara))_{\mpara \in \dom_\Mpara}$ a family of continuous (Lebesgue) densities on $(\users, \usersalg)$, continuous also in $x$.
	And let $Y_0 := 1 - Y_1$. The interpretation is that $Y_1$ is the fraction of users that choose slot $1$, i.e., a (stochastic) aggregate of $B$, and $Y_0$ is the remaining amount of users, that choose slot $0$.
	Since $Y=(Y_0, Y_1)$ is fully parameterized by $Y_1$, from now on we consider $Y$ to be 1-dimensional and stand for $Y_1$.
	%
	%
	%
	
	\item
Regarding users $i \in \users$ and utilities, let $h_i$ (Eq.~\ref{eqn:uta}) be the identity, 
and let $(i, y) \mapsto \tilde{U}_i(k, y)$ be a polynomial in $i, y$, for all $\slot \in \Slots$
(we dropped $W_i, h_i$ from the general $\tilde{U}_i$, Eq.~\ref{eqn:uta}). 
This means, in particular, that the utilities only depend on the \emph{amount} of users at the various slots, not on their \emph{identities}.
For any $k \neq l \in \slots$, let $\tilde{U}_i(k, y) - \tilde{U}_i(l, y) = \sum_m i^m q_m(y)$ be such that, for at least one $m \geq 1$, $q_m(y)$ 
is nonzero and constant in $y$.
\end{citem}
\end{Setting}
%
%
Note that, while in practice of course the set of (simultaneous) users and thus also (simultaneous) types of users is finite, having $\users=[0, 1]$ can be seen as an approximation with nice theoretical properties to real settings with \emph{many} users.
We may write $\mlarge$ and $\gnonat$ to denote assistant-based system (Definition \ref{def:m}) and \bg (Definition \ref{def:g}), respectively, for this particular \aggr.%
\footnote{In the assistant-based system $\mlarge$, let $\dom_A$ be the set of Borel measures on $[0, 1]$, since $A$ is a probabilistic forecast for $Y$ ($=Y_1$).
\maybe{MAYBE DROP: Furthermore, in $\mlarge$, as a tie-break rule for the assistant-best-responding Eq.~\ref{eqn:mbr}, assume users pick the lowest (w.r.t.\ the ordering on $K$) slot $k$ if several yield the same utility.}}
\maybe{Let $\gnonat$ denote the induced \bg $G$ (Section \ref{sec:pre}) for this setting.}
$\gnonat$ can be seen as an incomplete-information \emph{nonatomic game}, related to \citep{kim1997existence} but different in that our state can have uncountable range, see also Section~\extref{sec:supprelated}.
\footnote{The distribution over the types (which is not to be interpreted as a probability -- rather as one actual realization) is \emph{random}, turning it into an incomplete-information setting.
The name ``nonatomic'' comes from the fact that one considers \socalled{nonatomic measures} on the type space $\users=[0, 1]$.}
For the sake of completeness, let us formally state a version of Theorem \ref{prop:predne} for this setting, proved in Section \extref{sec:pr_cn}. 
\begin{Corollary}
\label{cor:nonat}
Setting~\ref{set:agg} is a special case of Setting \ref{set:gen}.
In particular, it satisfies the conditions of Theorem \ref{prop:predne} and hence the theorem's implications hold for $M = \mlarge$ and $\gamegen = \gnonat$.
\end{Corollary}


%
%
%
%
%
%

\maybe{
- linearize AROUND EQUILIBRIUM POINT because then optimal prediction nicely relates to NE again
- big question: does this really work on aggregate level? to what exactly do users best-respond?
- maybe it works if we assume that we linearize around the aggregate of an actual profile and somehow users can reconstruct their position form the aggregate only? on the other hand, this would contain ``recommendation'' parts again and not just ``prediction of others'' parts.}

\subsubsection{Self-fulfilling prophecy existence}
\label{sec:existence}

In contrast to the \nonagg, for the \aggr and the corresponding \bg $\gnonat$ there is less established work that helps to understand existence of a self-fulfilling prophecy policy.
Intuitively, a key question in this \aggr is: can a forecast that only forecasts an \emph{aggregate} of the users' actions (the $Y$ of Setting \ref{set:agg}) actually be a self-fulfilling prophecy and thus help for coordination? For instance, as observed by \citet{marevcek2016r}, if the population of users is completely homogeneous, they will all respond in the same way upon receiving the same input, making coordination difficult.
Here is our answer for this question -- the second of our two main theoretical results. 

\begin{Theorem}[Large-Scale Self-Fulfilling Prophecy Existence]
\label{thm:ex}
There exists a self-fulfilling prophecy policy $\pi$ in the assistant-based system $\mlarge$ (in Setting~\ref{set:agg}).
\end{Theorem}

This implies, based on Corollary \ref{cor:nonat}:

\begin{Corollary}[Large-Scale Bayesian Nash Equilibrium Existence]
\label{cor:ex}
The \bg $\gnonat$ (for Setting~\ref{set:agg}) has a Bayesian Nash equilibrium (BNE).
\end{Corollary}

\paragraph{Proof idea and interpretation:} 
The proof of Theorem \ref{thm:ex}, which is given in Section \extref{sec:expr} for an arbitrary number of slots $K$, is based on the Leray-Schauder-Tychonoff fixed point theorem, harnessing the compactness of the set of Borel measures, $\dom_A$, under a weak topology. The most important implication of the theorem is that $\min_\pi \lpred_\pi = 0$. And therefore, together with the first step in the form of Theorem~\ref{prop:argmin_ne}, it shows that an assistant that only forecasts an aggregate can nonetheless, when it achieves its optimum, help ``solve'' the coordination problem -- select a BNE. The intuition behind the assumptions is that types and their utility functions have to be \emph{diverse}. Corollary \ref{cor:ex} can be seen as stand-alone, purely game-theoretic result for $\gnonat$.

%
%


\subsubsection{An instructive linear special case}
\label{sec:linagg}
\renewcommand{\phi}{\varphi}
\newcommand{\expe}{\E_{Y' \sim a}(Y')}

%
Let us consider a simple special case of the \aggr (which is not central to understand the rest of the paper and can be skipped). On the one hand, this helps to get an intuition for Theorem \ref{thm:ex}, on the other hand this will justify assumptions we will make in the analysis of our algorithm in Section \ref{sec:expodamp}. Assume the utility $\tilde{U}_i(k, y)$ of Setting~\ref{set:agg} \maybe{ (we dropped $\Usig_i$ because it is constant) } is \emph{linear} in $i, y$ for all $k \in K$ (making the users ``risk-neutral''). So $\tilde{U}_i(1, y) - \tilde{U}_i(0, y) = i + \phi y + \chi$, for $\phi, \chi \in \R$.
Let $r(i|x) := \frac{1}{2 \delta} [x - \delta \leq i \leq x + \delta], i \in \users, x \in \dom_X$, with $[\cdot]$ the Iverson bracket (i.e., density of the uniform on $[x-\delta, x+\delta]$), and
let $P_X$ be the uniform on $[\delta, 1-\delta]$.
Then the value of $Y$ as a function of $A=a, X=x$ is, for $H$ the Heaviside function, given by
\begin{align}
&\int H\left( \int \tilde{U}_i( 1, y) - \tilde{U}_i( 0, y) d a(y) \right) r(i|\mpara) di \\
&= \int H\left( i + \phi  \E_{Y' \sim a}(Y') + \chi \right) r(i|\mpara) di \\
&= \frac{\phi }{2 \delta} \expe + \frac{1}{2 \delta} \mpara + \frac{\delta + \chi}{2 \delta},
\end{align}
for $x - \delta \leq - \phi \expe - \chi \leq x + \delta$, and 0 or 1, respectively, otherwise -- a \emph{piece-wise linear function} in $\expe, x$.

First, this shows that under the mentioned assumptions, $Y$ (and its distribution) only depends on the mean $\expe$, but no other properties of $a$.
In particular,  $\lpred_{\pi} = 0$ iff $\lpoint_{\pi'} = 0$, for $\pi$ an appropriate probabilistic extension of $\pi'$, and
\begin{align}
\label{eqn:ptf}
\lpoint_{\pi'} := \E_{M, \pi'} \left( \| A - \E( Y | \Acov ) \|_2^2 \right)
\end{align}
a \textbf{\emph{point prediction version}} of the probabilistic \textbf{\emph{prediction accuracy}} loss $\lpred_{\pi}$.%
\footnote{The reason why we take this definition of $\lpoint_{\pi}$ instead of, say, some form of ``$\E( ( A^{t} - Y^{t} )^2 | \Acov )$'', is because the (distribution of) $Y$ depends on $A$. And it may happen that the latter quantity, which is some form of ``variance'' that depends on the distribution of $Y$, is lower for a non-fixed point $A$ than for a fixed point $A$, which would hurt the relation to equilibrium selection. A similar reason underlies our definition of $\lpred_\pi$.}
This justifies for the assistant to provide \emph{point forecasts} under the above assumptions.
Second, this justifies a (locally) linear model for $Y$ in $(\expe, x)$ and noise.
Note that Theorem \ref{thm:ex} restricted to this simple linear case is immediate based on the intuitive fact that a generic linear function has a fixed point.

\sectionc{Setting for algorithm part -- control dynamics} 
\label{sec:interdef}
\label{sec:dyndef}
\label{sec:defdyn}
\label{sec:rep}


To prepare the algorithm part of the paper, let us extend the general one-stage setting (Setting \ref{set:gen}) and the assistant-based one-stage system (Definition \ref{def:m}) to a \sdefi{general dynamic setting} and an \sdefi{assistant-based dynamic system} $\mr$, respectively, in the following ``natural'' way.
This directly implies also dynamic extensions of small-scale and large-scale setting (Settings \ref{set:nonagg} and \ref{set:agg}) and the corresponding assistant-based systems (we do not introduce explicit symbols for them though).

%
%
%
%
%
%
%
%
The dynamic extensions consists of $\N$ copies of the one-stage versions, called \emph{stages/repetitions}. 
We denote variables, say $A$, in the $t$-th repetition by $A^t$, $t \in \N$. 
Furthermore, 
the dynamic extensions contains the following equations that replace/extend the ones of repetition $t$  -- think of it as a form of  feedback control model, a partially observable Markov decision process (POMDP) \citep{sutton1998reinforcement} (from the perspective of the assistant): 
\begin{align}
X^t &= \bar{X}(X^{t-1}, E^t) \\
A^t &= \pi(\Acov^{0:t}, A^{0:t-1}, Y^{0:t-1}) , \label{eqn:ye}
\end{align}
with $E^t$ independent stochastic error terms, (measurable) function $\bar{X}$, and (measurable) \defi{dynamic assistant policy $\pi$}.\footnote{$A^{0:t-1}$ means $A^0, \ldots, A^{t-1}$; similarly for other variables.} The gray, dashed arrows of Figure \ref{fig:dag} indicates this dynamic extension.
Regarding the assistant's objectives, let $\lpredt, \lpointt$ be defined similarly as $\lpred, \lpoint$, but additionally conditioning on the observed past: 
\begin{align}
\lpredt_\pi \! = \!\E_{M, \pi} \!\left( d \!\left(P_{M, \pi} \!\left(Y^t|A^{0:t-1}\!, V^{0:t}\!, Y^{0:t-1} \right)\!, A^t \right) \right) \!,
\end{align}
for all $\pi$, and similarly $\lpointt_\pi$.
%
Remember: \emph{stage $t$} must not be confused with \emph{(time) slot $k$ within} one stage.
%
%
%
%
%
%
%
%
%
%
%
%
%
%
%
To motivate the algorithmic part below, let us give two examples of naive dynamic assistant policies that fail.
\begin{Example}[Naive assistant yields oscillation]
\label{ex:oscil}
\label{expl:oscil}
Consider a toy scenario of two users, $i=1,2$, two slots, $\Slots = \{0, 1\}$, $\Usig, \Acov, X$ constant, and $(0,1)$ and $(1,0)$ the (pure) Nash equilibria of the induced complete-information \bg. For simplicity, let $\Uact$ be directly observed ($Y=\Uact$), let $A$ be a point forecast. As usual, assume each day $t$ both users best-respond to $A^t$.
The assistant starts with, say, $A^0=(0,0)$ and then, naively, each day takes yesterday's outcome $\Uact^{t-1}$ as forecast for today, $A^t$.
It is easy to see that this will lead to an \emph{overshooting and oscillating} system $\Uact^0 = (1,1), \Uact^1 = (0,0), \Uact^2 = (1,1), \ldots$ (called \socalled{flapping} by \citet{marevcek2015signalling}).
\end{Example}

\begin{Example}[``I.i.d.'' assistant is sub-optimal]
Classical forecasting applied to the sequence $\Uact^1, \Uact^2, ...$ from Example \ref{expl:oscil} would yield the empirical distribution $P(\Uact=\uact) = \frac{1}{2} ( \delta_{(0,0), \uact} + \delta_{(1,1), \uact} )$, with $\delta$ the Dirac delta, as optimal probabilistic forecast $A^t$ -- under some stationarity  assumption.
But the actual best forecast would be a Dirac delta on one of the two Nash equilibria $(0,1)$ and $(1,0)$ (Theorem \ref{prop:argmin_ne}; we ignore mixed equilibria here).
\end{Example}

\sectionc{Predictive assistant algorithms with guarantees}
\label{sec:alg}

In the first part, we analyzed conditions under which predictive assistants help coordination (in terms of the equilibrium selection objective, Section~\ref{sec:preobjectives}), \emph{if} they manage to optimize prediction accuracy, leaving open the \emph{``how''}.
Therefore, as second part of the paper, we address:

\begin{Goal} 
\label{quest:algo}
Design algorithms for the assistant policy $\pi$ in the dynamic assistant-based system $\mr$ that optimize prediction accuracy $\lpredt$ (and asymptotically select an equilibrium, if possible), learning from past interactions. 
%
%
\end{Goal}

We will consider dynamic versions of the two settings for which we established in Section \ref{sec:obj} that predictions can help coordination: \aggr (in Section \ref{sec:expodamp}) and \nonagg (in Section \ref{sec:consistresp}).
%
%
For each setting, we propose an assistant algorithm $\pi$, and provide a theoretical analysis of its dynamics/convergence.
A unifying idea behind both algorithms is that they mitigate certain bad user behavior, e.g., ``overshooting'' due to too many users jumping to the same purportedly ``good'' slot, helping convergence to a Nash equilibrium (of the \emph{stage} benchmark game).
Recall that \emph{users' utilities (functions) are hidden} from the assistant (Definition~\ref{def:m}), so the assistant's inference (about the equilibrium) is mainly based on behavioral data of how users react to forecasts.


\subsection{Expodamp for large-scale setting}
\label{sec:learn}
\label{sec:general_dyn}
\label{sec:meth}
\label{sec:expodamp}
\label{sec:stab}

%
%
Consider the dynamic \aggr\footnote{In particular, $Y$ is considered 1-dimensional (since $Y_1$ determines $Y_0$). The extension to more slots is straight forward.}
(Section \ref{sec:defdyn}) and let $A$ be a \emph{point forecast} for $Y$, i.e., $\dom_A = \dom_Y$, and consider $\lpointt_\pi$ as loss (dynamic version of Eq.~\ref{eqn:ptf}, as described in Section~\ref{sec:dyndef}). Recall that in Section~\ref{sec:linagg} we gave conditions that justify this point prediction approach.


We propose \defi{Expodamp} as described in Algorithm \ref{alg:expodamp} as the assistant's dynamic policy $\pi$. 
The intuition behind Expodamp is that this formula can dampen oscillations due to ``overshooting'' user behavior (Example \ref{ex:oscil}) but it can also accommodate for non-stationarities in user preferences. These intuitions will be made rigorous in the proposition below.%
\footnote{The formula in Algorithm \ref{alg:expodamp} is a case of a so-called exponential smoothing method \citep{hyndman2008forecasting}. However, so far (to the best of our knowledge) it has only been applied to classical forecasts that do not influence the outcome. In a sense, we generalize the established method to this new setting.}



\begin{algorithm}[t] 
\caption{Expodamp (\aggr)}
\label{alg:expodamp}
\KwSty{Input:} parameter: $\alpha$\\
\For{each stage $t\geq 1$}{
	\KwSty{Input:} $A^{t-1}, Y^{t-1}$\\
	\KwSty{Output:} $A^t := A^{t-1} + \alpha(Y^{t-1} - A^{t-1})$\\ \maybe{Maybe define it in the general way as in the proof: $A^t := A^{t-1} + \alpha_t (Y^{t-1} - A^{t-1})$ for sequence of parameters $(\alpha_t)_{t \in \N}$}
}
\end{algorithm}


\begin{Asm}
\label{asm:exposm}
Let the following equations hold for the dynamic assistant-based system $\mr$, $t\geq 1$:
\begin{align}
\tth^t &= \tth^{t-1} + E_{\tth}^t, \label{eqn:x} \\
Y^{t} &= \beta A^t + \gamma \tth^t + E_{Y}^t, \label{eqn:gamma} 
\end{align}
with $E_{\tth}^t, E_{Y}^t$ noise terms that are independent of the past and each other. (This is a \socalled{state-space model} known from the \socalled{Kalman filter} \citep{Luetkepohl2006}.)
\end{Asm}

Recall that in Section \ref{sec:linagg} we gave conditions, in the large-scale setting, that justify the linearity in Assumption \ref{asm:exposm} (note that the $X$ in Assumption \ref{asm:exposm} would correspond to a parameter of the distribution of $X$ rather than to $X$ itself in Section~\ref{sec:linagg}, but we neglect this detail for simplicity of notation).
Also note that Assumption \ref{asm:exposm} is a linear approximation which facilitates the theoretical analysis but comes at the cost of a mismatch to the actual setting: $(Y^t)_{t \in \N}$ in Assumption \ref{asm:exposm} can leave $[0,1]$ in the long run, so the model should rather be seen as a local approximation. Note that, due to convexity, Expodamp will always output $A^t \in [0,1]$ upon $Y^{t-j} \in [0, 1], j \geq 1$ though.
Keep in mind that the fixed point (self-fulfilling prophecy) of the linear function $a \mapsto \beta a + \gamma \bth$ (ignoring the noise term) is $\gamma (1-\beta)^{-1} \bth$ (exists whenever $\beta \neq 1$). In particular, if $\beta = (1-\gamma)$, then the fixed point (corresponding to the self-fulfilling prophecy/BNE) is $\bth$.
We can give the following guarantees
, for which we prove a generalization\footnote{It is formulated slightly cleaner, using the do-operator.}%
in Section \extref{sec:pr_stab}.

\begin{Proposition}[Optimality and Convergence Rate of Expodamp]
\label{prop:expod}
In the dynamic \aggr (Section \ref{sec:defdyn}), let Assumption \ref{asm:exposm} hold true.
Let the assistant's policy $\pi$ be Expodamp (Algorithm \ref{alg:expodamp}).
\begin{citem}
	\item \emph{Stochastic case:} 
	In Expodamp, let $\alpha:=(1-\beta)^{-1}$ for the true $\beta$ of Eq.~\ref{eqn:gamma}. \maybe{let $\alpha_t := \gamma (1-\beta)^{-1} Q_t, t \in \N$, with $Q_t$ a quantity that depends on $\gamma$ and the noise variances as defined in \extref{supp:eqn:Q}}
	Assume $E_{Y}^t = 0, t \geq 1$. Then, at each stage $t$, $\lpointpredt = 0$ and
	\begin{align}
	A^{t} {=} \arg\min_{a'} \E\left( \| A^{t} {-} Y^{t}\|_2^2 \mid A^{t}{=}a', A^{0:t-1}\!, Y^{0:t-1} \right).  \label{eqn:sa}
	\end{align}
	\item \emph{Deterministic case:} Assume that $\tth^t = \bth$ is constant, that $\beta = (1-\gamma)$ and that $E_{\tth}^t = E_{Y}^t = 0$.
	Then 
	\begin{align*}
	Y^{t} &= \bth +(1-\gamma)(A^0-\bth)(1-\alpha\gamma)^t , \text{  for all $t\geq 0$}. 
	\end{align*}
	That is, $Y^{t}$ converges exponentially with rate $\gamma\alpha$ towards the ``optimum''/fixed point $\bth$ (and thus also $A^t$ converges to $x$ based on Expodamp's formula) if
	$0<\gamma\alpha<2$.
\end{citem}
\end{Proposition}

When applying Algorithm \ref{alg:expodamp} in practice, often one does not know the parameter $\alpha$ a priori and has to infer it.
As a first approximation, it may be learned by naively fitting Algorithm \ref{alg:expodamp} to past observational data 
as if it were a classical (non-influential) forecasting method \citep{hyndman2008forecasting}.
In principle however, without going into detail, $\alpha$ rather has to be learned like a \emph{control policy}, based on how the environment responds to it.
%
%
%



\subsection{Partpred for \nonagg}
\label{sec:consistresp}
\label{sec:partpred}

While Expodamp is the main algorithm of this paper, here we also provide a proof-of-concept algorithm for the repeated \emph{small-scale} setting (Section \ref{sec:defdyn}). 
Assume $X^t$ to be independent of $X^{1:t-1}$, i.e., the special case where the $X^t, t \in N$ are i.i.d.
The algorithm, \defi{Partpred}, is sketched -- for the case that $\Acov$ is constant -- in Algorithm \ref{alg:partpredsk}, and fully described in Section \extref{sec:pr_partpred}.

\begin{algorithm}[tb] 
\caption{Partpred (small-scale; sketch)}
\label{alg:partpredsk}
\KwSty{Input:} parameters: $\bar{A}$, $r$; initialization: $a \in \bar{A}$\\
For $r$ steps, output $A=a$ and sample $\Uact$. Let $\hat{P}_{a}^r$ be the resulting empirical distribution of $\Uact$. \label{code:jump}\\
Let $a' := \arg\min_{a'' \in \bar{A}} \| a'' - \hat{P}_{a}^r \|$.
Let $a''$ be obtained by only taking over a subset 
of best responses from $a$ and $a'$ such that $a''$ is a correct forecast at least w.r.t.\ some users \\ 
%
\uIf{$a'' = a$}{
	Keep outputting $a$ forever 
}
\uElseIf{$a''$ and all other $a' \in \bar{A}$ have been tried $r$ times}{
	Set $a'' := \arg\min_{a'} \| a' - \hat{P}_{a'}^r \|$\\
	Keep outputting $a''$ forever 
}
\uElseIf{$a''$ has been tried $r$ times}{
	Pick unused $a'' \in \bar{A}$ at random
} \label{code:skrands}
Set $a = a''$ and jump to line \ref{code:jump}   \label{code:skrande} 
\end{algorithm}

The basic idea is as follows: as long as there is (significant) uncertainty about where the optimum (self-fulfilling prophecy/BNE) would be, the algorithm tries to make a prediction that is at least partially correct (i.e., makes the correct prediction at least w.r.t.\ the behavior of one player). 
The algorithm combines ideas from \socalled{best-response dynamics} and \socalled{congestion games} \maybe{and \socalled{potential games}} \citep{roughgarden2016twenty} with random exploration whenever the best-response dynamics would cycle.
Let $\bar{A}$ be the (finite) set of all distributions $P_{G, s}(\Uact)$ that arise from (deterministic) strategy profiles $s$ of $\gat$. For simplicity, we assume $\bar{A}$ to be given, but in a next step this could be inferred as well.
We give the following guarantee, sketched for $V$ constant, whose general version is proved in Section \extref{sec:pr_partpred}.

\begin{Proposition}[Convergence of Algorithm \ref{alg:partpredsk} (Sketch)] 
\label{prop:consistresp}
In the 
setting described above, 
assume $\gat$ has a strict BNE. 
Let the assistant's policies $\pi_r, r \in \N$ be given by Algorithm \ref{alg:partpredsk}, with parameter $\bar{A}$ as defined above.
Then, for any $\varepsilon > 0$, there exists $R, T$ such that for all $r > R, t > T$, it holds that 
$P( \lpredt_{\pi_r} = 0 ) > 1 - \varepsilon$ and $P( \text{$s_{\pi_r}$ is a BNE of $\gat$} ) > 1 - \varepsilon$. 
\end{Proposition}

\sectionc{Experiment} 
\label{sec:exp}


Here we empirically evaluate Expodamp (Algorithm \ref{alg:expodamp} for the \aggr) and a baseline.

\paragraph{Experimental setup:} We conducted our experiment in a real-world congested campus cafeteria with around 400 users per day. Here, observation $[Y^t]_k$ is (a proxy to) the number of people in the queue at time $k$ of day $t$.\footnote{While our general considerations allow $Y$ to be queue length, in our large-scale setting the components of $Y$ are the slots, of which queue length is rather something like an integral.} 
The coordination assistant in this experiment is a web app which provides the daily forecast (i.e., the forecast is updated once per day, in the morning -- more dynamic versions are future work) to the cafeteria users, to inform their decisions in terms of when to go to the cafeteria.
The web app is used by between 15 and 45 users per day but may influence more (slightly deviating from our model). %
Besides Expodamp (with parameter $\alpha$ tuned based on a previous observational sample), we evaluate the baseline method \defi{Average} defined by
\begin{align}
a^{t+1} := \frac{1}{t} \sum_{s=1}^t y^s, t \geq 2 \label{eqn:av}
\end{align}
(i.e., treating $y^{1:t}$ as purely observational i.i.d.\ sample).
Expodamp and Average are run as the policy that generates the forecast (which is then provided via the web app to the users of the cafeteria), each for a period of $T=35$ days. See Figure \ref{table:protocol} for an illustration of the experimental protocol. As metric, we use the mean squared error%
\footnote{We use $\ltildepointt$ as a sample-level proxy for the population-level loss $\lpoint$ of Eq.~\ref{eqn:ptf}. We conjecture that, under appropriate assumptions related to Assumption \ref{asm:exposm}, it can be shown that the policy that is optimal under the former loss converges (say, in probability) to a policy which is also optimal under the latter loss. The argument may build on the equivalence between $\lpointpredt = 0$ and Eq.~\ref{eqn:sa} in Proposition \ref{prop:expod}.}
\begin{align}
\ltildepointt := \frac{1}{T} \sum_{t=1}^T \| a^t -y^t \|^2_2.  \label{eqn:ev}
\end{align}


\begin{figure}[t]
\contourlength{1pt}
\centering 
\begin{tikzpicture}[scale=1]

\usetikzlibrary{decorations.markings}

\def\MarkLt{4pt}
\def\MarkSep{2pt}
\tikzset{
	TwoMarks/.style={
		postaction={decorate,
			decoration={
				markings,
				mark=at position #1 with
				{
					\begin{scope}[xslant=0.2]
					\draw[line width=\MarkSep,white,-] (0pt,-\MarkLt) -- (0pt,\MarkLt) ;
					\draw[-] (-0.5*\MarkSep,-\MarkLt) -- (-0.5*\MarkSep,\MarkLt) ;
					\draw[-] (0.5*\MarkSep,-\MarkLt) -- (0.5*\MarkSep,\MarkLt) ;
					\end{scope}
				}
			}
		}
	},
	TwoMarks/.default={0.5},
}

\draw[-{Latex[scale=1.2]},TwoMarks=0.2,TwoMarks=0.6,color=black] (0, 0) to (7.5, 0);

\draw[decorate,decoration={brace},xshift=0pt,yshift=0pt,align=center,anchor=north,color=rga] (1.4,-.3) -- (0.1,-.3) node [midway,yshift=-.2cm] {\small{no fore-}\\\small{cast pub-}\\\small{lished}};
\draw[decorate,decoration={brace},xshift=0pt,yshift=0pt,align=center,anchor=north,color=rga] (4.4,-.3) -- (1.6,-.3) node [midway,yshift=-.2cm] {\small{web app announces}\\\small{Expodamp's forecast}\\\small{to campus for 35d}};
\draw[decorate,decoration={brace},xshift=0pt,yshift=0pt,align=center,anchor=north,color=rga] (7.4,-.3) -- (4.6,-.3) node [midway,yshift=-.2cm] {\small{web app announces}\\\small{Average's forecast}\\\small{to campus for 35d}};

\node[obs, align=center,color=rga] at (1.5, .5) (a) {\small 1st intervention};
\node[obs, align=center,color=rga] at (4.5, .5) (b) {\small 2nd intervention};

\begin{scope}[on background layer]

\draw[fill=lightblue,draw=none,path fading=south] (-.2, .9) rectangle (7.6, -2);

\end{scope}

\end{tikzpicture}
\caption{Protocol of our real-world interventional experiment in a large campus cafeteria; steps along Y-axis.}
\label{table:protocol}
\end{figure}


\begin{table}[t]
\begin{tabularx}{\linewidth}{|X|l|}
	\hline 
	Method & $\ltildepointt$ (MSE; Eq.~\ref{eqn:ev}) \\ 
	%
	\hhline{|==|}
	Expodamp (Algorithm \ref{alg:expodamp}) & 69.56  
	\\
	\hline
	Average (baseline; Eq.~\ref{eqn:av}) & 74.25
	\\
	\hline
\end{tabularx}
\caption{Evaluation shows that Expodamp has higher prediction accuracy.}
\label{table:predeval}
\end{table}
%

\stress{The outcome} is in Table \ref{table:predeval}, showing that Expodamp outperforms Average in this experiment. For illustration, we also show a sample of Expodamp's output $A^t$ and actual outcome $Y^t$, for one day $t$, in Figure \ref{fig:predvis}.

\sectionc{Remarks and further related work}
\label{sec:rem}
\label{sec:alg_add}
\label{sec:obj_add}
\label{sec:addrelatedwork}
\label{sec:nerem}

This section\maybe{, which is not central to understand the main line of reasoning,} discusses additional aspects of the main results and further related work.


%
%
%
%
%
%
%
%
%
%

\parag{Why prediction accuracy / equilibrium selection as objective.}
Alternative to our approach in this paper, one could start from some (somehow legitimized) \emph{social welfare} \citep{nisan2007algorithmic} as a function of users' preferences, and design assistants that try to optimize it.
This would be somewhat more in line with the economic notion of optimizing efficiency.
Here, we rather follow a heuristic approach of starting with the ``natural'' prediction accuracy objective, because it compares well to the benchmark of equilibrium selection (Theorem~\ref{thm:predne}), and for the following reasons:
First, prediction accuracy can be directly measured, while social welfare seems hard to infer/identify from the incomplete information contained in the behavioral data available in our setting.
Second, it is easy to interpret for users and leads to a form of ``incentive compatibility'' of users' assistant-best-response (see remark below). 
Third, we feel that in our coordinative setting, equilibrium outcomes can be quite efficient in terms of social welfare.
Generally, social welfare functions of course are hard to pick and impose in the first place. 
%
Nonetheless, equilibrium outcomes can of course be significantly inefficient, which has extensively been studied under the name of \socalled{price of anarchy} \citep{roughgarden2005selfish,nisan2007algorithmic}.
But even in this regard, Theorem \ref{prop:argmin_ne} can be helpful in that it makes predictive assistant-based settings amenable to such studies.

\parag{Remarks on our model assumptions:} 
To justify our assumption of users ``blindly'' best-responding to the assistant's forecast (Definition \ref{def:m}) 
observe that it can be seen as consistent with (instrumental) rationality%
\footnote{In this work, we adopt the game-theoretic view of humans in social situations as ``selfish'' agents maximizing exogenously given individual utility functions. We feel this is appropriate for our simple setting of facility use. But overall, decision making in social systems has many more aspects of course.} 
in the following sense: if only considering the \emph{asymptotic} utility (once the assistant converged), then deviating from this behavior means deviating from a BNE, 
based on Theorem \ref{prop:argmin_ne}.\footnote{\citet{nisan2011best} studied rationality of \termtech{best-response dyn}.}
Furthermore, all users best-responding simultaneously can sometimes be a too strong assumption, but we feel that it is \emph{a} situation that can happen (more or less) at least \emph{sometimes}, and therefore is worth analyzing.
This being said, the assistant-best-responding assumption should be seen as a pragmatic \emph{first step} that can be refined in future work.
Generally, Theorem \ref{thm:predne} shows that assistant-based systems can achieve coordination comparable to the \bg (additionally, it serves as a mechanism for equilibrium \emph{selection} if there are several) -- \emph{but at a significantly lower cost}, since the inference task is \emph{centrally} done by the assistant. (Obviously, it is only cheaper when inference comes at a cost -- otherwise raw data $\Acov$ could simply be provided to users directly.)

\parag{Further general related work:}
Let us mention that for the various versions of assistants we mentioned in Section \ref{sec:intro} that are publicly available \citep{googlepopulartimes,dbtravel,franceforecasts}%
, we could not find out what algorithms or theory they rely on.%
\footnote{Also note that some of them do not explicitly call the service an ``assistant'' or a ``forecast''.}
Research-wise, in \emph{mechanism design\footnote{The analogy between our assistant and a mechanism is that they are both ``institutions'' added to the set of agents to solve some collective decision making problem.}}, a related direction has been emerging that studies how to design the \emph{information structure} \citep{taneva2015information,bergemann2017information} instead of the allocation/payment structure.
Furthermore, \emph{data-driven approaches to mechanism design} have gained momentum \citep{balcan2016sample,dutting2019optimal,tang2017reinforcement,kearns2014mechanism}. But these lines of research differ from ours -- often additionally to what we already mentioned in Section~\ref{sec:intro} (bounded rationality of our users and limited power of our mechanism) as follows: either they assume that agents input their (true, if ``incentive compatible'') preferences explicitly (instead of behavioral data), or they neglect, to some extent, agent's actual preferences (which can be appropriate for revenue maximization of course).

\sectionc{Conclusions}
\label{sec:conc}


In this work, we studied when and how parts of the coordination process of users of shared resources can be ``outsourced'' to a central data-driven predictive assistant.
Our theoretical analysis showed that such assistants can help solve this multi-agent coordination problem in a game-theoretic sense, but non-trivial conditions have to be met: in terms of the information and preference structure of users, 
and stochasticity of their preferences in case only large-scale aggregated information is available to the assistant.
Based on this analysis, we proposed two machine learning coordination assistant algorithms on behavioral data. We used linear dynamical systems models to prove their optimality/convergence, accounting for the fact that there is a feedback loop from predictions to outcomes.
And we conducted a large-scale interventional experiment in a real campus cafeteria that provided empirical hints for the validity of our main algorithm.

\maybe{As a potential next step, assistants whose owners impose own, say commercial, objectives (besides coordination) could be studied.}
Generally, the mentioned related work and our work indicate that there is a plethora of possible computational mechanisms for collective decision making, in terms of inputs (high-level information, behavioral data, and beyond) and influences (full control over the outcome, money incentives, pure information/predictions, and beyond), many of which may still be unexplored.
%

\parag{Acknowledgments.}
The authors thank Carl-Johann Simon-Gabriel, Jonathan Williams and Sebastian Stark for insightful discussions and engineering support.

\maybe{
\paragraph{Potential next steps.}
More sophisticated models of users' utilities may help to generalize more from past data beyond our state-space/congestion game model -- but extensive search and exploration still seems unavoidable. Our assumption of users always perfectly best responding to the assistant -- as key source of information about their utilities -- may be relaxed.
Generally, more sophisticated user inputs, assistant outputs, intra-day assistant-user interactions and, e.g. ticket systems, could further help coordinate users.
\todo{selfish assistants- see previous versions}
}

\appendix

\onecolumn


\section*{Appendix}



\newcommand{\asssep}{(``assistant-separability'')\xspace}

\renewcommand{\extref}[1]{\ref{#1} of the main paper}  
\renewcommand{\extrefs}[1]{\ref{#1}}

\renewcommand{\todo}[1]{}


\newtheorem{SuppProposition}{Proposition}
\renewcommand*{\theSuppProposition}{S\arabic{SuppProposition}}

\newtheorem{SuppLemma}{Lemma}
\renewcommand*{\theSuppLemma}{S\arabic{SuppLemma}}

\newtheorem{CiteTheorem}{Theorem}
\newtheorem{CiteCorollary}{Corollary}
\newenvironment{CTheorem}[1]{\renewcommand*{\theCiteTheorem}{#1 of the paper}\begin{CiteTheorem}}{\end{CiteTheorem}}
\newenvironment{CCorollary}[1]{\renewcommand*{\theCiteCorollary}{#1 of the paper}\begin{CiteCorollary}}{\end{CiteCorollary}}

\newcommand{\suppciteref}{\extref}

\renewcommand{\thealgocf}{S\arabic{algocf}}


\section{Background on game theory}
\label{sec:gameback}

Game theory \citep{osborne1994course,shoham2008multiagent} models the \emph{interaction between strategic\footnote{``Strategic'' means that they have goals/objectives/preferences/interests/utilities and take the best possible means to achieve them, accounting for the (multi-agent) context; a common alternative expressions are ``(instrumentally) rational'', ``self-minded'' or ``self-interested''.} agents}, that is, settings with several such agents and where the utility of any one of them is influenced by the actions of one or several of the others.
Since each agent's utility depends on the other agents' actions, each agent has to reason about how the others act when deciding on its own action.

The modeling in game theory is usually split into two parts:
First, a \emph{game} formalizes, in a sense, the decision making \emph{problem}, that the agents (also called ``players'') are facing.\footnote{The name ``game'' likely comes from games of parlor being a special case of such games, but also based on them being a \emph{metaphor} for general multi-agent situations, a metaphor that helps for the formal abstraction.}
Note that, in a sense, there are two problems, but often they are treated simultaneously: the \emph{descriptive} problem of predicting which strategies the players will chose when facing the game, and the \emph{prescriptive} problem that each player faces -- choosing the strategy that best serves her objective.
Second, game theory considers \defi{solution concepts} \citep{shoham2008multiagent} that formalize how the agents will (in the descriptive interpretation) or should (in the prescriptive interpretation) approach this problem (game).

Now essentially, a game, as used by game theory, represents each agent by (1) a \emph{utility function}, that models her interest/preferences/goals, and (2) a set of possible \emph{actions} that she can take and has full control over.\footnote{Alternative formulations use preference relations instead of utility functions.}
In the simplest case, called a \emph{(complete-information) normal-form game} \citep{shoham2008multiagent}, this is essentially already the full model.
In this case it is assumed that all utility functions are \emph{fixed} and each agent \emph{knows} the utility function of all other agents.

\begin{table}[h!]
\begin{center}
	\setlength{\extrarowheight}{2pt}
	\begin{tabular}{cc|c|c|}
		& \multicolumn{1}{c}{} & \multicolumn{2}{c}{Player 2}\\
		& \multicolumn{1}{c}{} & \multicolumn{1}{c}{12noon}  & \multicolumn{1}{c}{1pm} \\\cline{3-4}
		\multirow{2}*{Player 1}  & 12noon & $(0,0)$ & $(2,1)$ \\\cline{3-4}
		& 1pm & $(1,2)$ & $(0,0)$ \\\cline{3-4}
	\end{tabular}
\end{center}
\caption{Example of a two-player, two-action, complete-information coordination game. Note that our treatment in the main paper is for the more general class of \emph{incomplete-information} Bayesian games to account for the significant uncertainty about (other) agents' preferences that usually occurs in such settings.}
\label{table:coord}
\end{table}

Let us give an example of such a complete-information normal-form game with two players and two actions each. We consider a simple coordination problem where the players can chose between going at 12noon or at 1pm to a cafeteria\footnote{This is a toy version of the setting of our cafeteria experiment, Section \extref{sec:exp}.}, and aim to avoid each others, say to avoid queuing. Additionally, assume that going early is favored by both. Specifically, let the game be given by the \defi{payoff matrix} in Table \ref{table:coord} (payoff matrix is just another term for ``utility matrix''). This representation has to be read as follows: 
player $l$'s utility, in case player 1 chooses action $i$ and player 2 chooses action $j$, is given by the $l$-th entry of the tuple at column $i$, row $j$ of the matrix. For instance, if player 1 goes at 12noon and player 2 goes at 1pm, then player 1 has utility 2 and player one has utility 1.

Given such a game, each player can chose an action, and we can also consider jointly the actions of the players. We formalize such joint actions by \defi{action profiles}, i.e., tuples of actions, one for each player.
For any such action profile, we can ask if it is a \emph{solution} to the game, according to some solution concept, as mentioned above.
The most common solution concept for complete-information normal-form games is the \defi{(pure) Nash equilibrium}. In the two-player, two-action case, it is defined as any action profile $(i, j)$, such that no player can improve her utility by unilaterally deviating to an action $i'$, i.e., every player chooses her optimal action (``best-responds'') given the other action in the tuple is fixed. For instance, in Table \ref{table:coord}, the action profile $(\text{12noon, 1pm})$ is a (pure) Nash equilibrium.

Clearly, ``complete-information'' is a strong assumption and therefore a generalization of these complete-information games has been proposed \citep{harsanyi1967games}, to model the case where the utility functions (more specifically: the precise influence of the joint action on the utilities of the agents) are a priori unknown to the agents (so in a sense: \emph{the game is a priori unknown}).
Intuitively, this absence of knowledge comes from agents' preferences as well as relevant external events not being determined/known a priori.
Instead, it is assumed that each agent, before choosing its action, observes a (private) \emph{signal} and then uses a Bayesian \emph{prior distribution} over this signal and the other relevant variables for its inference, according to Bayes rule.
%
%
%
%
%
%

Let us give a formal definition, based on \citep{osborne1994course} but adapted for our purposes (Remark \ref{rem:bg}).
\begin{Definition}
A \defi{Bayesian game} consists of a set $\users$ of players, a state of the world $X$, and for each player $i \in \users$:
\begin{cenum}
	\item a signal $\Theta_i$ out of a set of possible signals $\dom_{\Theta_i}$,
	\item an action $\Uact_i$ out of a set of possible actions $\dom_{\Uact_i}$,
	\item a utility function $\bar{U}: \dom_{(X, \Theta_i, \Uact)}  \to \R$,
\end{cenum}
and a common (``objective'') prior distribution $P(X, \Theta_i, i \in \users)$.
\end{Definition}

\begin{Remark}
\label{rem:bg}
We adapted the definition in \citep{osborne1994course} to our purposes in three ways:
First, we dropped the assumption of finite \emph{cardinality} of the sets (somewhat similar to \citep{kim1997existence}).
Second, we formulate it more in the spirit of \emph{random variables} (we event treat actions as random variables, in the sense that choosing a specific action corresponds to \emph{intervening} on the action variable, in the sense of causal models \citep{Pearl2000}), instead of just specifying their \emph{ranges}.
Third, our ``state of the world'' $X$ is a random variable and it does not have to \emph{determine} the value of the other variables. In contrast, in \citep{osborne1994course} the ``state'' is the \emph{outcome}, in the probability theoretic sense (as element of the sample space, typically denoted by $\omega \in \Omega$) \citep{klenke2013probability}, instead of a random variable. Accordingly, in their definition, the utility function does not have to depend on the signal, and the prior is already specified by a distribution over the state.
But a Bayesian game according to our definition can be mapped to one according to the definition in \citep{osborne1994course} in the obvious way (essentially replacing our $X$ by the state in the sense of the outcome), and vice versa.
%
\end{Remark}

Now, while there is no single one established solution concept (mentioned above) for a Bayesian game, the most common one is the \emph{Bayesian Nash equilibrium} as we define it in Section \extref{sec:pre}.
Note that in the case of the Bayesian game, potential solutions are given in the form of \emph{strategy profiles} (as we also introduce it in Section \extref{sec:pre}), i.e., tuples $(s_i)_{i \in \users}$, where each $s_i$ is a \emph{strategy} -- a mapping from the set $\dom_{\Theta}$ of possible signals of player $i$, to the set of possible actions $\dom_{\Uact_i}$ of player $i$. This generalizes the notion of an action profile introduced above, accounting for the fact that the player's behavior is only fully specified once we determine her action \emph{for each possible observed signal}.

%

\section{Additional related work and comments}
\label{sec:supprelated}
\label{supp:sec:supprelated}

\parag{Further general related work in game theory:}
Within game theory, note that \socalled{correlated equilibria} were studied \citep{osborne1994course} that require a correlated (i.e., central) signal.
Inference of preferences from behavioral data has been studied \citep{ling2018game}, but they do not feed the results back into the multi-agent system.
Interpretations of the Nash equilibrium as self-fulfilling prophecy have been discussed, often informally, in \socalled{epistemic game theory} \citep{sep-epistemic-game,spohn1982how}. But they do not give a rigorous analysis of the specific conditions on information/utility/response structure for a concrete setting where the prophecy comes from an ``external'' agent.
Influential forecasts have been studied, also using fixed-point formulations but for election predictions, by \citet{simon1954bandwagon}. 
Bayesian games with discrete actions but continuous states and signals have been studied by \citet{hellman2017bayesian}.

\parag{Further related work within smart cities research:}
Besides the work already discussed in Sections \extref{sec:intro} and \extref{sec:existence}, also the following work in the area of smart cities and control is on congestion/coordination in shared facilities (often with some form of central assistant or signal):
\citep{wirth2019nonhomogeneous} consider agents that share a constraint resource and receive a central capacity signal. 
For the case that the agents behave according to a (randomized) so-called additive-increase multiplicative decrease (AIMD) algorithm (alongside additional assumptions), their theoretical analysis shows convergence against the optimum under an overall objective function given by the sum of the individual agents' utilities.
\citep{haeusler2014closed} also discuss the problem of flapping for the case of coordinated (balanced) routing of cars in road networks, and present a randomized approach to it.
\citep{schlote2014closed} consider users of bike sharing stations and their decision making in terms of which station to go to for renting/returning a bike. They present an approach that combines providing users with occupancy data and a random assignment based on it, for the sake of balancing.
The main differences between these works and ours are that (1) we focus on game-theoretic Bayesian Nash equilibrium solutions to the coordination/congestion problem (and the conditions under which it exists, in Theorem \extref{thm:ex}), and (2) our results focus more on the conditions under which the assistant can solve certain inference/prediction tasks (like assistant-separability in Theorem \extref{prop:predne}).
%
%


%




\parag{Further related work for Theorem \ref{thm:ex}:}
%
%
%
Our setting relates to \socalled{nonatomic games} \citep{schmeidler1973equilibrium} studied in game theory.
However, we are only aware of two lines of work that study the \emph{incomplete-information} case in the setting of a nonatomic continuum of types: \citet{sabourian1990anonymous}, but they do not focus on Bayesian Nash equilibrium existence in the stage game itself. More closely related is \citep{kim1997existence}: they study existence of a Bayesian Nash equilibrium in incomplete-information nonatomic games in quite general terms. But they do not cover our case where the ``state of the world'' ($X$) has an uncountable range. Furthermore, existence of a self-fulfilling aggregate prophecy is not entailed by their results (using our equivalence in Theorem \ref{prop:argmin_ne}), due to potential non-strictness of their Bayesian Nash equilibrium. In this sense, our Corollary \ref{cor:ex} may also be of value for the game-theoretic side. 
\maybe{Note that our $\gnonat$, similar as in \citep{kim1997existence}, is very general: it allows $\users$ to be a continuum}
Let us also mention \citet{rath1992direct}, who, in one part of their proof of their Theorem 1, also reduces the Nash equilibrium existence problem to existence of a form of self-fulfilling prophecy on the aggregate level (without considering it as such). But they restrict to the complete-information case.
From the smart cities research side, we already mentioned \citet{marevcek2016r,marevcek2015signalling} above.
They essentially propose two solutions: either sending different signals to different agents (which we, in a different sense, also do in the \nonagg, Setting \ref{set:nonagg}) or the population of agents has to be heterogeneous, which relates to our assumption of random types. But their heterogeneity is rather in the behavior, not in the form of individually differing utility functions, as in our case.

\paragraph{Remarks on Algorithm \ref{alg:partpredsk} and Proposition \ref{prop:consistresp}.}
Algorithm \ref{alg:partpredsk} is mainly a proof-of-concept to illustrate several points: 
An assistant can handle simultaneous/imperfectly orchestrated user responses.
And while most assistant-free dynamics of ``learning in games'' (mentioned in Section~\ref{sec:intro}), such as \socalled{best-response dynamics} or \socalled{ficticious play} \citep{shoham2008multiagent}, only converge in special cases, an assistant can help to overcome cycling and oscillations, use exploration, and make the system always converge (with high probability; in the finite setting under consideration). Furthermore, as the extension of Proposition~\ref{prop:consistresp} in Section~\extref{sec:pr_partpred} will make more clear, the assistant can use prior knowledge of the utility functions, e.g. that they form a congestion game \citep{nisan2007algorithmic}, to speed up convergence. 
Generally, the area of ``learning in games'' is related to ours in that they also consider the case where agents do not know the preferences of others.
Note that in ``learning in games'', often agents first build a model/belief about the other agents behavior and then optimize their decision under it -- referred to as \socalled{model-based decision making} \citep{shoham2008multiagent}.
Our setting can be seen as a version of such model-based decision making, where the data-driven modeling task is \emph{``outsourced''} to the central assistant.

\paragraph{Additional remarks:} 
Besides the price of anarchy  \citep{roughgarden2005selfish,nisan2007algorithmic}, further limitations can occur when extending the setting: for instance it could happen that the assistant would figure out that making users not use the assistant (e.g., by deliberately providing poor forecasts for some time) could yield more predictable outcomes than other strategies (although, based on our results, never as good ones as Nash equilibria) -- possibly yielding completely undesired assistant behavior.
Note that, instead of making a statement about the reasonableness of $\lpred$ \emph{in isolation}, which is impossible, rather here we analyzed the \emph{combination} $(\lpred, \sigma)$, for a certain joint user behavior $\sigma = (\sigma_i)_{i \in \users}$.
Also note that so far we only consider classical (Bayesian) Nash equilibria, but the results may be extendable to harness the assistant to also announce \socalled{correlated equilibria} \citep{osborne1994course}.

\parag{Additional related work for algorithmic aspects:} 
Regarding Algorithm \extref{alg:expodamp},
\citet{zhang2013wait} apply various machine learning methods to wait time prediction, similar as we do, but not considering influential predictions or non-stationarities. \citet{smyrnakis2010dynamic} model multi-agent dynamics using latent-state models, but from the view of one of the players and in a non-aggregate setting.
\maybe{
Abstracting away from the specific congestion problem, and only addressing the predictive assistant aspect while ignoring that the environment consists of multiple utility-directed agents, our problem can on an abstract level (if we fix user behavior) be seen as a specific instance of the \socalled{multi-armed bandit/reinforcement learning/control} problem \citep{sutton1998reinforcement} \todo{drop this/move below: if the assistant has a significant influence, or, if the influence is negligible, as a classical \socalled{time series forecasting} problem \citep{zhang2013wait} often addressed via \socalled{exponential smoothing and Kalman filtering} methods \citep{hyndman2008forecasting} (Section \ref{sec:expodamp})}. 
}
%
%
%
%


\section{Some additional notation and details on measurability assumptions etc.\ in Section \extref{sec:pre}}
\label{supp:sec:addnas}

\subsection{Additional notation}
\label{supp:sec:addn}

For the following sections, let us introduce explicit names for certain mechanisms that are part of our basic model of Section \extref{sec:pre}, for which we have not given explicit names there:
\begin{citem}
\item We use $\sigma_i$ to denote \emph{user $i$'s behavior in the assistant-based system}, i.e., $i$'s policy that generates her action $B_i$ from the input ($W_i, A$) in the assistant-based system, obeying Eq.~\extref{eqn:mbr}, i.e., the ``best response'' to forecast $A$ (it is uniquely defined by Eq.~\ref{eqn:mbr} together with the tie-breaking rule we give in Section \ref{supp:sec:addass}). And we let $\sigma = (\sigma_i)_{i \in \users}$.
\item We denote by $\bar{U}_i$ the \emph{generic utility function}, the mechanism that generates user $i$'s utility $U_i$, i.e., $U_i = \bar{U}_i(X, B)$. (Recall that in the main paper we generally assume inference-assistability and thus solely use the restricted form of the mechanism in the form of the function $\tilde{U}$. In particular, $\utb_i(X, (\Uact_i, \Uact_{-i})) = \uta_i(\Usig_i, \Uact_i, h_i(Y))$ under the assumption of inference-assistability.) 
\end{citem}

Generally, keep in mind that, for random variables $Z_1, Z_2$, $P(Z_1|z_2)$ is shorthand for the (regular) conditional distribution $P(Z_1|Z_2=z_2)$, for $z_2$ a value of $Z_2$.

\subsection{Details on measurability assumptions etc.\ and soundness of definitions in Section \extref{sec:pre}}
\label{supp:sec:addass}

We were very brief in Section \extref{sec:pre} regarding measurability assumptions etc. Here we explicate the assumptions we meant there in detail.\footnote{These assumptions are more general than, but still somewhat tailored to, the two main settings we consider (small-scale and large-scale). For our purposes this is enough. We do believe that Theorem \extref{prop:argmin_ne} holds more generally; but the measurability side of things becomes rather involved.}

We will have somewhat different assumptions regarding ranges, $\sigma$-algebras, measurability etc.\ of variables for two different cases: (1) finitely many users and (2) infinitely many.
In the proofs, we will treat both cases simultaneously where we can, and treat them separately where we have to.

%
\subsubsection{Case of finitely many users}
\label{supp:sec:addassfin}

In case the set $\users$ (the set of users) is finite, we assume that (these assumptions are, in a sense, a generalization of Setting \extref{set:nonagg}):
\begin{itemize}
\item there is some underlying probability space $(\Omega, \mathcal{F}, P)$,
\item regarding ranges $\dom_Z$ of the variables $Z \in \{X, V, W_i: i \in \users \}$, we assume that they are either \emph{all} finite or \emph{all} continuous (meaning compact subsets of a Euclidean space), 
\item the ranges $\dom_Z$ of all the variables $Z \in \{X, V, W_i, B_i, Y, h_i(Y), U_i : i \in \users \}$ are equipped with a respective $\sigma$-algebra denoted by $\alg_Z$ (and Cartesian products of ranges are equipped with the respective product $\sigma$-algebras); in particular, $\dom_{B_i}, i \in \users$ is $K$ and is equipped discrete $\sigma$-algebra, and $\dom_{U_i}, i \in \users$ is $\R$ and is equipped with the Borel sets, and the others are either, in the case of discrete ranges, equipped with the discrete $\sigma$-algebras or, in the case of continuous ranges, with the Euclidean topology and the Borel sets,
\item $X, V, W$ are random variables on $(\Omega, \mathcal{F})$,
\item the variable $A$ as range $\dom_A$ has $\meas = \meas^{\dom_Y}$, the set of Borel measures\footnote{In the case of finitely many users, $Y$ is assumed to be finite and then the Borel measures are just the usual simplex in the Euclidean space. However, with the formulation in terms of Borel measures we can simultaneously cover the case of infinitely many users, where $Y$ is potentially continuous.} \cite{klenke2013probability} on $\dom_Y$, and it is equipped with $\tau_A$ the weak topology \citep[Remark 13.14(ii)]{klenke2013probability} and $\alg_A$ the Borel sigma algebra induced by the weak topology, 
\item the range of the public outcome variable $Y$ is finite,
\item the function $\bar{Y}$ is measurable w.r.t.\ the respective (product) $\sigma$-algebras,
\item the functions $y \mapsto h_i(y), i \in \users$, and $(i, w_0, y) \mapsto \tilde{U}_i(w_0, k, h_i(y))$ are continuous and bounded in all arguments with continuous ranges\footnote{Again this is a formulation to cover the case of finitely and infinitely many users simultaneously.}, for all $k \in \slots$, 

\item the assistant policy $\pi$ is measurable w.r.t.\ $\alg_V$ and $\alg_A$, 
\item we assume that users \emph{break ties} by preferring the slot $k \in K \subset \N$  with the lower number (i.e., take the natural ordering of $\N$ as the tie breaking preference ordering whenever two slots yield the same (expected) utility $U_i$ for them), this together with Eq.~\ref{eqn:mbr} uniquely determines the user behavior $\sigma$,

\item for a strategy profile $s$ (and in particular a BNE $s$), we assume that $s$ is measurable w.r.t.\ the product $\sigma$-algebra $\alg_W \otimes \alg_V$ to $\alg_A$.
\end{itemize}

\subsubsection{Case of infinitely many users}

In case the set $\users$ is infinite (interpreted as \emph{types} of users in this case), we make the following assumptions  (these assumptions are, in a sense, a generalization of Setting \extref{set:agg}), as modifications of those for the case of finite $\users$ stated above (Section \ref{supp:sec:addassfin}):
\begin{itemize}
\item we now assume $V, W$ to be constant \reasonwhy{meaning constant random variable; reason why i dont want to drop them being random variables is that for instance then the assistant-seperability satetement does not make sense anymore for this case} (corresponding to Setting \extref{set:agg}), with $W_i$ also being constant in $i$, 
\item we let $I = [0,1]$ (in this case interpreted as types of users), and equip with the Borel sets as $\sigma$-algebra $\usersalg$, 
\item we consider the variables $A, B$ not to be random variables but only variables, in particular, only have a range but not a $\sigma$-algebra, 
\item $\pi$ takes the values of $V$ as argument, but we do not assume measurability in this argument anymore (alternatively one can consider $\pi$ not to take any argument),
\item $\sigma_i, i \in \users$ takes the values of $W_i, A$ as arguments, but we do not consider it as a potentially measurable function anymore (alternatively one can consider $\sigma_i$ to only have argument $A$),
\item for the variable $B$, we introduce an explicit range $\dom_B$, which is a subset of $\times_{i \in \users} \dom_{B_i}$, and can be a \emph{proper} subset (in particular, this constraints the range of $\sigma=(\sigma_i)_{i \in \users}$),
\item the range of the public outcome variable $Y$ can be continuous,
\item we consider $\bar{Y}$ to be a function with domain $\dom_X \times \dom_B$, for which we do not require measurability in all arguments but only that, for fixed $b$, $x \mapsto \bar{Y}(x, b)$ is measurable,
\item for a strategy profile (and in particular a BNE) $s$, (which we still consider to take the values of $V, W$ as input, but not to be measurable in them anymore) the mapping $i \mapsto  s_i$ is measurable w.r.t.\ $\usersalg$ and $2^\slots$.
\end{itemize}

%
%
%
%
%

\subsection{Well-definedness in terms of measurability etc.}

Above (in the case of finitely many users), we assumed measurability of all of the relevant ``primitive'' mappings that occur in Section \extref{sec:pre}. However, for $\sigma$ (defined in Section \ref{supp:sec:addn} as shorthand for Eq.~\extref{eqn:mbr} and tie-breaking), the user behavior in the assistant-based system (Definition \extref{def:m}), we have to \emph{prove} measurability, because it is not a ``primitive'' mapping, but rather defined based on other mappings.
This, together with measurability of $\pi$, also establishes the soundness of the definition of the corresponding strategy profile (Eq.~\extref{eqn:s}); we will use this in the proof of Theorem \extref{thm:predne}.


\begin{SuppLemma}
\label{supp:lem:m}
The following mappings are measurable w.r.t. the respective (product) $\sigma$-algebras (in the general setting and the canonically associated assistant-based system):
in the case of finitely many users,
for fixed $i$,
\begin{align}
(w_i, a) &\mapsto \sigma_i(w_i, a) \label{supp:eqn:fin1}
\end{align}
and, in the case of infinitely many users, for fixed $w, a$,
\begin{align}
i &\mapsto \sigma_i(w_i, a) . \label{supp:eqn:fin2}
\end{align}
\end{SuppLemma}

\begin{proof}[Proof of Lemma \ref{supp:lem:m}]
Note that, in the case of discrete ranges, everything is measurable, so let us focus on the case of continuous ranges.

We explicate the proof for the case of two slots, i.e., $\slots = \{0,1\}$. The case of more slots works similarly.

\parag{Measurability of the mapping in Eq.~\ref{supp:eqn:fin1}\footnote{We formulate this proof more generally than we would have to: we formulate it for $A$ being a general Borel measure, although we only consider the case of finitely many users where $A$ is actually always a measure over a finite set.}:}

Recall that $\dom_A$ is $\meas = \meas^{\dom_Y}$, the set of Borel measures on $\dom_Y$, and we equip it with $\tau_A$ the weak topology (with the bounded continuous functions as ``test functions'' \citep[Remark 13.14(ii)]{klenke2013probability}) and $\alg_A$ the Borel sigma algebra induced by the weak topology.

Let $H$ be the Heaviside step function.
Let $i$ be arbitrary but fixed.
Let 
$$Q(w,y) := \uta_i(\usig, 1, h_i(y)) - \uta_i(\usig, 0, h_i(y)), w \in \dom_{W_i}, y \in \dom_Y$$
(here $w$ is a value of $W_i$, not of $W$, for ease of notation).

We have to show that (keep in mind that values $a$ of $A$ are elements of $\meas$, i.e., measures)
\begin{align}
(w, a) \mapsto \sigma_i (w, a) = H( \int Q_i(w,y) da(y) )
\end{align}
is measurable, as a mapping from the product $\sigma$-algebra $\alg_{W_i} \otimes \alg_A$ to the discrete $\sigma$-algebra on $\slots$.
For this it is enough show that
\begin{align}
(w, a) \mapsto \int Q(w,y) da(y)
\end{align}
is continuous w.r.t.\ the respective (product) topologies, because continuity implies measurability and furthermore, the Heaviside function $H$ is measurable, and so their concatenation is \citep{klenke2013probability}.

So let $(w_m, a_m) \xrightarrow{m \to \infty} (w, a)$ w.r.t.\ the product topology of $\tau_{W_i}$ and $\tau_A$, which implies convergence $w_m \xrightarrow{m \to \infty} w$ and $a_m \xrightarrow{m \to \infty} a$ w.r.t.\ the individual topologies as well.
Let $R_m(y) := Q(w_m, y)$ and $R(y) := Q(w, y)$ for all $y$.

Then (using an argument similar to \citep[Proposition 3.13]{brezis2010functional})
\begin{align}
&\left| \int Q(w_m,y) da_m(y) - \int Q(w,y) da(y) \right| \\
&= \left| \int R_m d a_m - \int R d a \right| \\
&\leq \left| \int ( R_m - R ) d a_m  \right| + \left| \int R d a_m - \int R d a \right| .
\end{align}
The second term converges to zero by definition of the weak convergence (because $R$ is continuous and bounded, thus qualifies as a ``test function'').
The first term can be bounded by 
\begin{align}
\int \| R_m - R \|_\infty d a_m \leq \| R_m - R \|_\infty \int 1 d a_m = \| R_m - R \|_\infty \xrightarrow{m \to \infty} 0,
\end{align}
since $\uta_i$ and thus $Q$ is uniformly continuous (based on continuity and $\dom_{W_i}$ and any other range to being compact) and thus $R_m$ converges uniformly to $R$.  \todo{or need to be more explicit? compactness of ranges? continuity of $U(i, w, y)$ wrt product topology implies continuity also when we fix an arg i guess right}. 

%
%
%
%
%
%
%
%
%
%
%
%

\parag{Measurability of the mapping in Eq.~\ref{supp:eqn:fin2}:}

Let $a, w_i$ be arbitrary but fixed.
First, observe that for all $k \in \Slots$,
\begin{align}
i \mapsto f_k(i) := \E_{Y' \sim a}   \left( \uta_i(\usig_i, k, h_i(Y'))\right)
\end{align}
is measurable since we assumed 
\begin{align}
(i, y) \mapsto \uta_i(\usig_i, k, h_i(y)) \label{supp:eqn:me1}
\end{align}
to be measurable (w.r.t.\ the product $\sigma$-algebra) and then we can apply standard arguments involved in Fubini's theorem (more specifically: \citep[Theorem 14.16, Eq.\ 14.6]{klenke2013probability}).
%
%
%
Now, observe that for all $i$ 
\begin{align}
\sigma_i(\usig_i, a) = 0
\end{align}
iff
\begin{align}
0 \in \arg\max_{\uact_i} \E_{Y' \sim a}   \left( \uta_i(\usig_i, \uact_i, h_i(Y'))\right)
\end{align}
iff
\begin{align}
\E_{Y' \sim a}   \left( \uta_i(\usig_i, 0, h_i(Y'))\right) - \max \left( \E_{Y' \sim \pi(\acov)}   \left( \uta_i(\usig_i, 0, h_i(Y'))\right), \E_{Y' \sim \pi(\acov)}   \left( \uta_i(\usig_i, 1, h_i(Y'))\right) \right) = 0
\end{align}
But the l.h.s.\ of the latter equation is a composition ($f_0(i) - \max(f_0(i), f_1(i)) $) of measurable functions (recall that we showed $i \mapsto f_k(i) = \E_{Y' \sim a}   \left( \uta_i(\usig_i, k, h_i(Y'))\right)$ to be measurable) that is measurable again \citep{klenke2013probability}.

\end{proof}

Also keep in mind the following statement, which guarantees measurability of assistant policies induced by strategy profiles.

\begin{SuppLemma}
\label{supp:lem:pis}
In the case of finitely many users,
given a strategy profile $s$, the corresponding assistant policy $\pi_s$ (Eq.~\ref{eqn:cp}) is measurable from $\alg_V$ to $\alg_A$.
\end{SuppLemma}

\begin{proof}
We have to show that 	$v \mapsto P_{G, s}( Y | v)$ is measurable, as a mapping from $\range_V$ equipped with $\alg_V$, to $\range_A$ (generally: the set of Borel measures on $Y$), equipped with $\alg_A$ (generally: the Borel sets induced by the weak topology on  $\range_A$).


Let $f(x, w, v) := \bar{Y}(x, s(v, w))$.

Since we assumed $\dom_Y$ to be finite in the case of finitely many users, $\dom_A$ (generally the Borel measures on $\dom_Y$) is simply a subspace of the Euclidean space, and $\alg_A$ are simply the Borel sets on it. 
So $P( Y | v)$ can be seen as a finite-dimensional vector in the Euclidean space with components $P( Y = y^l | v) = P( (X, W, V) \in f^{-1}(\{y^l\}) | v), l = 1, \ldots, m$, assuming $\dom_Y = \{y^1, \ldots, y^m\}$.

So to show that 	$v \mapsto P_{G, s}( Y | v)$ is measurable, it is enough to show that each of its components $P( (X, W, V) \in f^{-1}(\{y^l\}) | v)$ is measurable in $v$. 
But this holds true since  given any measurable set $S$, we have that $P( (X, W, V) \in S | v)$ is measurable in $v$ (by the definition of conditional expectations/distributions \citep{klenke2013probability}).


\end{proof}

Also keep in mind the following observation.

\begin{Remark}
\label{supp:rem:pf}
Recall how we defined the corresponding assistant policy $\pi_s(v)$ in Eq.~\ref{eqn:cp} as $P_{G,s}(Y|V=v)$.
Note that, since we assumed $\dom_Y$ to be equipped with the Borel sets as $\sigma$-algebra, the (regular) conditional distribution $P_{G,s}(Y|V=v)$ is a Borel measure on $Y$.
This implies that the output of $\pi_s(v)$ is guaranteed to be contained in the range $\dom_A$ we assumed for it -- the Borel measures (Section \ref{supp:sec:addass}).
%
\end{Remark}

\section{Proofs for Section \extref{sec:obj}}

\todo{in the end, copy the most recent versions of the results here}

\newcommand{\mapp}{J}
\newcommand{\Diffu}{Q}
\newcommand{\diffu}{p}
\newcommand{\yvec}{y}
\newcommand{\Ua}{\tilde{U}}
\newcommand{\newK}{|K|}
\newcommand{\Fcomp}{k}
\newcommand{\newk}{n}


\subsection{Theorem \extref{prop:argmin_ne}} 
\label{sec:pr_argmin_ne}

Before proving it, let us restate the result
\footnote{Note that even when not assuming assistant-separability, a BNE may be achieved. However, this would be a BNE w.r.t.\ a different game, where players would not use the full information available to them -- $V, W_i$.}
:

\begin{CTheorem}{\extrefs{thm:predne}}
\label{suppcite:thm:predne}
We have, in the general setting (Setting \extref{set:gen}, with all users being inference-assistable and assistant-separable): 
\begin{citem}
	\item If the assistant policy $\pi$ in the assistant-based system $M$ (where all users are assistant-best-responding) is a self-fulfilling prophecy (i.e., $\lpred_{\pi} = 0$)
	, then the corresponding strategy profile $s_\pi$ \maybe{(Eq.\ \ref{eqn:s})} is a Bayesian Nash equilibrium (BNE) of the \bg $\game$.
	\maybe{If $\lpred_\pi=0$, then $\lne_\pi=0$.}
	\item Conversely, \maybe{in the case of finite ranges, }if the strategy profile $s$ is a strict BNE of the \bg $\game$, then the corresponding assistant policy $\pi_s$ is a self-fulfilling prophecy. 
	\maybe{\item In particular, if a strict BNE exists in $G$, then: if $argmin .... L_\pi=0$, then $s_\pi$ is essentially a BNE of $\game$.}
\end{citem}
\end{CTheorem}

\begin{proof}[Proof of Theorem \suppciteref{prop:predne}]

\remark{main things i'm not entirely sure about:\\
	- whether i can just fix one of the $c_i$ to a constant and things (in particular $U_i$) are still measurable (might be the case, in case i assume product sigma algebras -- based on extending klenkes prdoct sigma algebra arg) -- but am i actually doing this anywhere ($C$ is not an RV!)\\
	- whether everything i take measure over is actually measurable and well defined, e.g. $\bar{U}$.\\
}

\remark{my impression is that for this proof, $\Uact$ itself does not have to be a random variable. only $\obsmech$ has to be an RV again. whenever $\Uact$ occurrs, I think one can replace RV by just a *function* form $\Omega$ or $\Usig$ or so. similarly for $\pi$ -- maybe in the end only $\sigma(\usig, \pi(w))$ has to be measurable from $\usig, w$}

Let $Z_i := h_i(Y), i \in \users$.

\begin{Claim}
	\label{claim:yomama1}
	If $\lpred_\pi=0$, then $s_\pi$ is a BNE of $\gamegen$.
\end{Claim}

\begin{proof}[Proof of Claim \ref{claim:yomama1}]

	\newcommand{\Om}{X}
	\newcommand{\om}{x}
	%
	%
	%
	First note that $s_\pi$ can be written slightly more compactly than in Eq.~\ref{eqn:s}\footnote{There we used the notation based on the conditional expectation not because we refer to some average $B_i$, but only to rigorously refer to the value of $B_i$ conditioned on $W_i, V$, which is actually fully determined by these variables.}, using $\sigma$ as defined in Section \ref{supp:sec:addn}, in the following way, for all $\usig_i, \acov$:
	\begin{align}
	[s_\pi]_i(\usig_i, \acov) := \sigma_i(\usig_i, \pi(\acov)). \label{supp:eqn:sp}
	\end{align}
	
	\stress{Main derivation:}
	
	We state the following sequence of equalities for the case of finite $\users$ with stochastic $V, W$; the case of infinite $\users$, where $V, W$ are constant, is analogous but even simpler (essentially one has to drop all the occurring $V, W, v, w$).
	We have, for all $i, \usig_i$ and almost all $\acov$,
	\begingroup
	\allowdisplaybreaks
	\begin{align}
	&[s_\pi]_{i}(\acov, \usig_i)  \label{supp:eqn:de0}\\
	&= \sigma_i(\usig_i, \pi(\acov))  \label{supp:eqn:de01} \\
	&\in \arg\max_{\uact_i'} \E_{Y' \sim \pi(\acov)}   \left( \uta_i(\usig_i, \uact_i', h_i(Y')) \right) \label{supp:eqn:de1}\\
	&= \arg\max_{\uact_i'} \E_{Y' \sim P_{M, \pi}(\sur|\acov)}   \left( \uta_i(\usig_i, \uact_i', h_i(Y')) \right) \label{supp:eqn:de2}\\
	&= \arg\max_{\uact_i'} \E_{Z_i' \sim P_{M, \pi}(h_i(\sur)|\acov)}   \left( \uta_i(\usig_i, \uact_i', Z_i') \right)  \label{supp:eqn:de9} \todo{?}\\
	&= \arg\max_{\uact_i'} \E_{Z_i' \sim P_{M, \pi}(Z_i|\acov)}   \left( \uta_i(\usig_i, \uact_i', Z_i') \right)  \label{supp:eqn:de8}\\
	&= \arg\max_{\uact_i'} \E_{Z_i' \sim P_{M, \pi}(Z_i|\acov, \usig_i)}   \left( \uta_i(\usig_i, \uact_i', Z_i') \right) \todo{?} \label{supp:eqn:dei}\\
	&= \arg\max_{\uact_i'} \E_{Z_i' \sim P_{G, s_\pi}(Z_i|\acov, \usig_i)}   \left( \uta_i(\usig_i, \uact_i', Z_i') \right)  \label{supp:eqn:deg}\\
	&= \arg\max_{\uact_i'} \E_{Z_i' \sim P_{G, [s_\pi]_{-i}}(Z_i|\acov, \usig_i)}   \left( \uta_i(\usig_i, \uact_i', Z_i') \right)  ,\label{supp:eqn:de7} 
	\end{align}
	\endgroup
	where: \todo{i guess here, measurability of $s$ in $\Acov, \Usig_i$ is needed (but not in $i$!?!) or do i actually only need measurability of $\obsmech(s(....))$ ?}
	\begin{citem}
		\item Eqs.~\ref{supp:eqn:de01}, \ref{supp:eqn:de1} hold by definition of $s_\pi$ and the definition of the user behavior $\sigma_i$ (Section \ref{supp:sec:addn}, Eq.~\extref{eqn:mbr} -- assumption ``assistant-best-responding''), respectively.
		\item To understand Eq.~\ref{supp:eqn:de2}, let us look at what our assumption $\lpred_\pi=0$ implies.
		Based on its very definition, it implies
		\begin{align}
		\pi(\acov) &= P_{M, \pi}(\sur|\acov) 
		\end{align}
		for almost all $\acov$.
		\item Eq.~\ref{supp:eqn:dei} follows from our assumption \asssep that $Z_i \ind \Usig_i | \Acov$ \maybe{in $M$} for any $\pi, \sigma$.\todo{maybe slightly update this to the current formulation} 
		\item To understand Eq.~\ref{supp:eqn:deg}, observe that the only thing that can be different between $M_\pi$ (when ignoring $A$) and $G$ (with a ``plugged in'' strategy profile) is the mechanism that generates $\Uact$ from $V, W$.
		But, by our definition of $s_\pi$, this mechanism is in fact the same in $M_\pi$ and $G$ with ``plugged in'' $s_\pi$. Therefore, all (random) variables, in particular $Z_i$, coincide between $M_\pi$ and $G$ with $s_\pi$.
		\item To understand Eq.~\ref{supp:eqn:de7} note that $Z_i$ is defined without $\Uact_i$ needing to be defined. Therefore, it is already defined in $G$ with ``incomplete strategy profile'' $[s_\pi]_{-i}$ alone.
		\item Generally, note that terms like ``$P_{M, \pi}(\sur|\acov)$'' -- a regular conditional distribution -- though we would not necessarily always need them, are well-defined and exist in our setting (of discrete or Euclidean ranges) \citep[Theorem 8.37]{klenke2013probability}.
	\end{citem}
	
	But Eqs.~\ref{supp:eqn:de0} through \ref{supp:eqn:de7} 
	mean that for almost no $\acov, \usig_i$, player $i \in \users$ could improve his utility by deviating from $[s_\pi]_i(\acov, \usig_i)$.
	
	\stress{Measurability discussion:}
	
	Generally, note that
	$\uta_i, i \in \users$ is measurable also when we fix some of its arguments, because we assumed it to be measurable w.r.t.\ the respective 
	product $\sigma$-algebra \citep[Lemma 14.13 and Theorem 14.16]{klenke2013probability}.
	
	Still for the case of $\users$ finite, note that $s_\pi$ is measurable w.r.t.\ the product $\sigma$-algebra $\alg_W \otimes \alg_V$ to $\alg_B$ (which is necessary for it to be a strategy profile), for the following reasons: We assumed $\pi$ to be measurable w.r.t.\ $\alg_V$ to $\alg_A$. And $\sigma=(\sigma_i)_{i \in \users}$ is measurable w.r.t.\ $\alg_W \otimes \alg_A$ to $\alg_B$ due to the first part of Lemma \ref{supp:lem:m}.\footnote{Since $\alg_B$ is the product $\sigma$-algebra, $\sigma$ measurable is equivalent to $\sigma_i$ measurable for all $i$ \citep[Corollary 1.82]{klenke2013probability}.} 
	But $s_\pi$ is just the composition $\sigma(\cdot, \pi(\cdot))$. 
	
	It remains to be shown that $s_\pi$ is a strategy profile, in terms of measurability  (in the sense of Section \ref{supp:sec:addass}), also for the \emph{case of $\users$ infinite and $V, W$ constant}. 
	Specifically, we have to show that for any values $v, w$ that $V, W$ are fixed to, that
	\begin{align}
	i \mapsto [s_\pi]_i(w_i, \acov) = \sigma_i(w_i, \pi(\acov)) 
	\end{align}
	is measurable in $i$ (w.r.t.\ codomain $\slots=\{0, 1\}$ equipped with the power set as $\sigma$-algebra).
	But this directly follows from the second part of Lemma \ref{supp:lem:m} (plugging in $\pi(v)$ for $a$).
	%
	
	%
	%
	%
	
	Everything together implies that $s_\pi$ is a BNE of $G$.
	%

	%

\end{proof}

\begin{Claim}
	\label{claim:yomama2}
	Conversely, if $s$ is a strict BNE of $G$, then $\lpred_{\pi_s}=0$.
\end{Claim}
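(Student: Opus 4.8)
The plan is to show that the announcement $\pi_s$, defined by $\pi_s(w) := P_{G,s}(Y\mid w)$, is a self-fulfilling prophecy, i.e.\ that $P_{M,\pi_s}(Y\mid w) = \pi_s(w)$ for almost all $w$, which is exactly $\lpred_{\pi_s}=0$. The whole argument is the converse of Claim~\ref{claim:yomama1}: there we started from an accurate forecast and derived the BNE condition, whereas here we start from a strict BNE and run the chain of equalities backwards. Strictness is precisely what lets us invert the step in which an $\arg\max$ was only known to \emph{contain} $s_i$.

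First I would reduce the goal to a statement about the induced strategy profile. In the system $M$ driven by $\pi_s$ together with the assistant-trusting behavior $\sigma$, let $s_{\pi_s}$ be the induced profile of Eq.~\ref{eqn:s}. Since $(X,W,\Theta)$ carry the common objective prior $P_M(X,W,\Theta)$ in both $M$ and $G$, and $Y=f_Y(X,C)$ is the same structural map, it suffices to prove $s_{\pi_s}=s$ almost everywhere: then the actions coincide, $C_{M,\pi_s}=C_{G,s}$, so every downstream variable (in particular $Y$) has the same law, giving $P_{M,\pi_s}(Y\mid w)=P_{G,s}(Y\mid w)=\pi_s(w)$ and hence $d(\cdot,\cdot)=0$ a.s.

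The core step is establishing $s_{\pi_s}=s$. The assistant-trusting response $\sigma_i(\theta_i,\pi_s(w))$ is by definition (Eq.~\ref{eqn:mbr}) a maximizer of $c_i' \mapsto \E_{Y'\sim \pi_s(w)}\big(\tilde U_i(\theta_i,c_i',h_i(Y'))\big)$; crucially this quantity depends only on the announced distribution $\pi_s(w)=P_{G,s}(Y\mid w)$, so no self-consistency is presupposed. Rewriting the expectation over $h_i(Y')$ and using $h_i(Y)\ind\Theta_i\mid W$ to replace conditioning on $w$ by conditioning on $(w,\theta_i)$, then invoking utility-assistability (Eq.~\ref{eqn:uta}, so that $h_i(Y)$ does not depend on $c_i'$) to fold $\tilde U_i$ back into $\bar U_i$, this $\arg\max$ becomes exactly $\arg\max_{c_i'}\E_G\big(\bar U_i(X,(c_i',s_{-i}(\Theta,W)))\mid \theta_i,w\big)$, the right-hand side of the BNE condition for $s$. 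Because $s$ is a \emph{strict} BNE, this set is the singleton $\{s_i(\theta_i,w)\}$ for almost all $i,\theta_i,w$; hence the assistant-trusting maximizer must equal $s_i(\theta_i,w)$, i.e.\ $s_{\pi_s}=s$ a.e.

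I expect the main obstacle to be the conditional-independence step, where I pass from $P_{G,s}(h_i(Y)\mid w)$ to $P_{G,s}(h_i(Y)\mid w,\theta_i)$: the hypothesis is stated for $M$, so I must argue it transfers to $P_{G,s}$. The clean justification is structural: by utility-assistability $h_i(Y)$ is a $C_i$-independent function, so under any separable behavior it reduces to a function of $(X,\Theta_{-i},W)$, identically in $M$ and in $G$; the CI is therefore a property of the shared prior $P_M(X,W,\Theta)$ and carries over, exactly as in the coupling argument around Eq.~\ref{supp:eqn:pp1} in Claim~\ref{claim:yomama1}. Secondary care is needed for measurability of $s_{\pi_s}$ (inherited from that of $\pi_s$ and $\sigma$) and for the ``almost all'' qualifiers, but these are routine once the uniqueness supplied by strictness is in hand.
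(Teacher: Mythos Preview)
Your proposal is correct and follows essentially the same route as the paper: both arguments show $\sigma_i(\theta_i,\pi_s(w))=s_i(\theta_i,w)$ by unwinding the assistant-trusting argmax via $\pi_s(w)=P_{G,s}(Y\mid w)$, the conditional-independence step, and strictness, then conclude $P_{M,\pi_s}(Y\mid w)=P_{G,s}(Y\mid w)=\pi_s(w)$. Your remark that the CI hypothesis (stated for $M$) transfers to $G$ because $h_i(Y)$ is structurally $C_i$-independent matches the paper's brief justification at the corresponding step.
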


\begin{proof}[Proof of Claim \ref{claim:yomama2}]
	Let $s=(s_i)_{i\in \users}$ be a strict BNE of $G$. That is, for all $i, \usig_i, \acov$,
	\begin{align}
	s_i(\usig_i, \acov) \in \arg\max_{\uact_i'} \E   \left( \utb_i(X, (\uact_i, (s_j(\Usig_j, \Acov ))_{j \in \users \setminus \{i\}} \middle| \usig_i, \acov  \right) , \label{supp:eqn:am}
	\end{align}
	with the argmax being unique.

	
	Similar as above, we state the following derivation for the case of finite $\users$ with stochastic $V, W$; the case of infinite $\users$, where $V, W$ are constant, is analogous but even simpler (essentially one has to drop all the occurring $V, W, v, w$).

	\stress{First}, for the case of the assistant's policy being $\pi_s$, we have for all $i,\acov, \usig_i$,
	\todo{there seemed to be some incoherent leftovers ($f_{W_i}$, intermedita $\bar{U}$ etc. in the above commented align. here's a cleaned up one, although im not entirely sure if now everything is fine:}
	\begingroup
	\allowdisplaybreaks
	\begin{align}
	&\sigma_i(\usig_i, \pi_s(\acov)) \\
	&\in \arg\max_{\uact_i'} \E_{Y' \sim \pi_s(\acov)}   \left( \uta_i(\usig_i, \uact_i', h_i(Y')) \right) \label{supp:eqn:um3} \\
	&= \arg\max_{\uact_i'} \E_{Y' \sim P_{G, s}(Y|\acov)}   \left( \uta_i(\usig_i, \uact_i', h_i(Y')) \right) \label{supp:eqn:um4} \\
	&= \arg\max_{\uact_i'} \E_{Z_i' \sim P_{G, s}(h_i(Y)|\acov)}   \left( \uta_i(\usig_i, \uact_i', Z_i') \right) \label{supp:eqn:um5} \\
	&= \arg\max_{\uact_i'} \E_{Z_i' \sim P_{G, s}(Z_i|\acov)}   \left( \uta_i(\usig_i, \uact_i', Z_i') \right) \label{supp:eqn:um6} \\
	&= \arg\max_{\uact_i'} \E_{Z_i' \sim P_{G, s}(Z_i|\acov, \usig_i)}   \left( \uta_i(\usig_i, \uact_i', Z_i') \right) \label{supp:eqn:um7} \\
	&= \arg\max_{\uact_i'} \E_{Z_i' \sim P_{G, s_{-i}}(Z_i|\acov, \usig_i)}   \left( \uta_i(\usig_i, \uact_i', Z_i') \right) \label{supp:eqn:um8} \todo{?} \\
	&= \arg\max_{\uact_i'} \E_{G, s_{-i}}   \left( \uta_i(\Usig_i, \uact_i', Z_i) \middle| \usig_i, \acov \right) \label{supp:eqn:um9} \todo{does this follow directly based on the defi of the conditional expectation?} \\
	&= \arg\max_{\uact_i'} \E_{G, s_{-i}}   \left( \uta_i(\Usig_i, \uact_i', h_i(\obsmech(X, \uact_i', (s_j(\Usig_j, \Acov ))_{j \in \users \setminus \{i\}}) ) ) \middle| \usig_i, \acov \right) \label{supp:eqn:um11} \todo{?} \\
	&\ni s_i(\usig_i, \acov) , \label{supp:eqn:um12}
	\end{align}
	\endgroup
	where: \todo{i may have to adapt the fact that $\Usig_i$ is a function of $X$. } 
	\begin{citem}
		\item Eq.~\ref{supp:eqn:um3} is our assumption that users are assistant-best-responding.
		\item Eq.~\ref{supp:eqn:um7} follows from our assumption \asssep that $Z_i \ind \Usig_i | \Acov$ for any $\pi, \sigma$ and thus also in $G$ with any $s$.
		\item Eq.~\ref{supp:eqn:um11} follows from how we defined $Z_i$. 
		\item Eq.~\ref{supp:eqn:um12} is due to $s$ being a BNE (and our assumption that users are inference-assistable (Eq.~\extref{eqn:uta})).
	\end{citem}
	
	\maybe{\todo{not sure if the following makes sense since we did not assume $U_i$ to be measurable on $\Uact$} Generally note that quantities like 
		\[ \E_{G, s_{-i}}   \left( \utb_i(X, (\uact_i, (s_j(\Usig_j, \Acov ))_{j \in \users \setminus \{i\}} \middle| \usig_i, w \label{supp:eqn:am} \right) \]
		can be read as
		\[ \E_{G, s_{-i}}   \left( \utb_i(X, (s_i(\Usig_j, \Acov ), (s_j(\Usig_j, \Acov ))_{j \in \users \setminus \{i\}} \middle| \usig_i, w \label{supp:eqn:am} \right) \]
		for some appropriate $s_i$ and thus are well defined since we assumed $U_i, i \in \users$, which is given by $\utb_i(X, (s_i(\Usig_j, \Acov ), (s_j(\Usig_j, \Acov ))_{j \in \users \setminus \{i\}}$ to be measurable.
		\notsure{THIS IS NOT FULLY SATISFACTORY, instead it may be clearer to say that $U_i$ is a random variable on $\Acov, \Usig$ in our assumptions .... but what is an approproate sigma-algebra on $\Usig$?
			generally, there may be an issue when fixing one argument of an uncountably infinite product becuse then the klenke argument may not hold anymore \url{https://math.stackexchange.com/questions/248032/is-product-of-borel-sigma-algebras-the-borel-sigma-algebra-of-the-product-of/248587}
			ACTUALLY  THIS IS MAYBE THE WRONG INTUITION AT ALL - CONFUSING OBSERVING AND DOING;
			THE SIMPLEST SOLUTION may be to equip $\Usig$ by a product sigma algebra of the $\Usig_i$. because then the klenke Lemma 14.13 argument should still hold (or rather: be extendable to this case).
			ACUALLY: maybe this argument fully makes sense after all and i don't have to worry aout product salgebras etc. because that's what its all about: is there another *valid* strategy profile modified only on $i$ which is better (not even sure if it has to be better on all $\usig_i$ -- see MAS argument). BUT: this argument does maybe not hold for the assistant-based setting -- unless for $\uta$.
	}}
	
	
	Since we assumed the above argmax to be unique, we get that for all $i, \usig_i, \acov$,
	\begin{align}
	\sigma_i(\usig_i, \pi_s(\acov)) = s_i(\usig_i, \acov) . \label{supp:eqn:umm}
	\end{align}


	\stress{Second}, we have,  for all $i, \acov$,
	\begin{align}
	&P_{M, \pi_s}(Y|\acov) \\
	&= P\left(\obsmech\left(X, \left(\sigma_i\left(\Usig_i, \pi_s(\Acov)\right)\right)_{i \in \users}\right)|\acov\right) \\
	&= P\left(\obsmech\left(X, \left(s_i\left(\Usig_i, \Acov \right)\right)_{i \in \users}\right)|\acov\right) \label{supp:eqn:z0} \\
	&= P_{G, s}(Y|\acov) \label{supp:eqn:z1} \\
	&= \pi_s(\acov), \label{supp:eqn:z2}
	\end{align}
	where:
	\begin{citem}
		\item Eq.~\ref{supp:eqn:z0} follows from above's derivation ending with Eq.~\ref{supp:eqn:umm}.
		\item Eq.~\ref{supp:eqn:z2} is just the definition of $\pi_s$.
	\end{citem}
	
	
	This implies $\lpred_{\pi_s}=0$, which is what had to be shown.

	%
	
	\stress{Measurability discussion:} 
	
	For the case of finitely many users, see Lemma \ref{supp:lem:pis}. In the case of infinitely many, nothing has to be shown (regarding the correctness of the codomain of $\pi_s$, see Remark \ref{supp:rem:pf}). 

\end{proof}

\end{proof}

\renewcommand{\game}{\gnonat}

\subsection{Corollary \extref{cor:nonat}}
\label{sec:pr_cn}

Before proving it, let us restate the result form the main paper:

\begin{CCorollary}{\extrefs{cor:nonat}}
Setting~\extref{set:agg} is a special case of Setting \extref{set:gen}.
In particular, it satisfies the conditions of Theorem \extref{prop:predne} and hence the theorem's implications hold for $M = \mlarge$ and $\gamegen = \gnonat$.
\end{CCorollary}

\begin{proof}[Proof of Corollary \extref{cor:nonat}]



Throughout this proof, let $\acov, \usig$ be arbitrary but fixed.


\stress{Part 1: show that general model assumptions of Section \extref{sec:model} are satisfied}

\stress{Regarding correctness of the range of $\sigma(a, \pi(\acov))$ (i.e., showing that it ranges within $\dom_\Uact = \{ \uact \in \N^\users : \text{ for all \( k \in \Slots, \{ j \in \users : \uact_j = k \} \) is measurable    } \} $):}

This follows from the fact that Setting~\extref{set:agg} satisfies the requirements of Setting~\extref{set:gen} w.r.t.\ the continuity of $\tilde{U}_i$ (that we stated in detail in Section \ref{supp:sec:addass}), which was all we needed in Lemma \ref{supp:lem:m}. Because the (second part of the) lemma implies that for all $w, v$,
\begin{align}
i &\mapsto \sigma_i(w_i, \pi(\acov)) 
\end{align}
is measurable, which is what had to be shown.

\stress{Regarding the measurability of all mechanisms:} 

%
%
%
%

\emph{Regarding product measurability of $x \mapsto \obsmech(x,\uact)$} 

We have to show that $x \mapsto \int \uact_i r(i|x) d i$ is measurable for fixed $b$.

We show the more general statement that $\dom_X \times \dom_\Uact \to \dom_Y = \R; (x, \uact) \mapsto f(x,\uact) := \int \uact_i r(i|x) d i $ is $\alg_X \otimes \alg_\Uact$-Borel measurable,
where, just for the sake of this proof, we assume $\dom_B$ to be equipped with a $\sigma$-algebra $\alg_\Uact$ as will be detailed below. (This implies what needs to be shown because in our setting measurability/contiuity in both arguments implies the same for the individual arguments when fixing the respective other \citep[Lemma 14.13]{klenke2013probability}.)

Here, let $L^2$ denote the Lebesgue space of square integrable functions over $[0, 1]$ w.r.t.\ the Lebesgue measure (usually denoted $L^2([0, 1])$).

\newcommand{\bor}{\mathcal{B}}
Let $\tau_X$ denote the topology of $\dom_X$.
Let $\tau_{L^2}$ denote the topology of $L^2$.
Let $\bor_\tau$ denote the Borel $\sigma$-algebra induced by a topology $\tau$.

Recapture our assumptions:
\begin{citem}
	\item $\alg_X = \bor_{\tau_X}$,
	\item $\alg_\Uact = \bor_{\tau_{L^2}}$,
	\item $x \mapsto r(\cdot|x) =: r_x$ is continuous from $\tau_X$ to $\tau_{L^2}$.
\end{citem}

Let $\tau_1$ be the product topology of $\dom_X, L^2$ and
$\tau_2$ be the product topology of $L^2, L^2$.
%

It follows from our assumptions that the mapping
\begin{align}
f: (x, \uact) \mapsto (r_x, \uact)
\end{align}
is continuous w.r.t.\ source topology $\tau_1$ and target topology $\tau_2$.

But the $L^2$ inner product $\langle \cdot, \cdot \rangle$ is continuous w.r.t.\ source topology $\tau_2$ and target topology $\R$. Therefore, the concatenation $\langle \cdot, \cdot \rangle \circ f$ is continuous from $\tau_1$ to $\R$. Hence it is measurable w.r.t.\ the Borel $\sigma$-algebra $\bor_{\tau_1}$ to the Borel sets on $\R$.

We assumed $(\dom_X, \tau_X)$ to be Polish.
And $L^2$, the space of $\Uact$, is Polish.
Therefore (based on \citep{klenke2013probability}[Theorem 14.8]), $\bor_{\tau_1}$, the Borel $\sigma$-algebra induced by the product topology on $\dom_X, L^2$, coincides with the product $\sigma$-algebra $\alg_X \otimes \alg_\Uact$.

%
%
%
%
%
%
%
%
%

\stress{Regarding correctness of $\dom_A$ (i.e., that it contains the distribution over $Y$ that is entailed by it):}
We have to show that $P(Y|A=a) \in \dom_A$ for any $a \in \dom_A$.
To see this, note that we above showed that $x \mapsto \obsmech(x, \uact)$ is measurable with $\alg_Y$ being the Borel $\sigma$-algebra on $\dom_Y$.
Therefore the pushforward measure $P(Y|A=a) = P(g(X)|A=a)$, for $g(x) = \obsmech(x, \uact)$, is a Borel measure, i.e., element of $\range_A$.



%

\stress{Part 2: show that the remaining conditions of the underlying theorem are satisfied}

\stress{Regarding assumption inference-assistability:}
Let $h_i$ simply be the identity. 
Then, based on Setting \extref{set:agg}, we have
\begin{align}
\uta_i(f_{\Usig_i}(x), \uact_i, \obsmech(x, \uact)) = \uta_i(f_{\Usig_i}(x), \uact_i, h_i(\obsmech(x, \uact)))
\end{align}
and $h_i(\obsmech(x, \uact))$ does not depend on $\uact_i$ since 
\[Y = R_\Mpara( \{ j \in \users : \Uact_j = k \} ) = R_\Mpara( \{ j \in \users : \Uact_j = k, j \neq i \}  ) \]
based on the fact that $R_\mpara$ -- which is the measure induced by the density $r_x$ -- has a density w.r.t.\ the Lebesgue measure.

%
%
%
%
%
%
%

\stress{Regarding assumption of assistant-separability:}

Since we assumed that $\Usig_i$ is constant for all $i \in \users$, it follows trivially that $Z_{i} \ind \Usig_{i} | \Acov$.


%
%
%
%
%
%
%
%
%
%
%
%
%
%
%
%

\end{proof}

\subsection{Theorem \extref{thm:ex}}
\label{sec:expr}

%
%
%

The following statement generalizes Theorem \extref{thm:ex} in that here we allow an arbitrary finite number $|K|$ of (time) slots, not just two (i.e., arbitrary finite $K$, not just $K = \{0, 1\}$).

\begin{Theorem}
There exists a self-fulfilling prophecy policy $\pi$ in the assistant-based system $\mlarge$, in Setting~\extref{set:agg} but with an arbitrary finite number $|K|$ of slots.
\end{Theorem}

Here is the proof of this generalized version:

\begin{proof}[Proof for Theorem \extref{thm:ex}]

Keep in mind that in the current large-scale setting, $i \in I$ are interpreted as \emph{types} of users (with the same utility function), not users themselves.
Also keep in mind that the set of slots (i.e., actions available to the users) is $K = \{1, \ldots, |K| \}$.
Since $\Usig$ is constant, here, instead of $\Ua_{i}(\usig, b, k)$, we will write $\Ua_{i}(\yvec,k)$ for utility of type $i$ when choosing slot $k \in \Slots$ given amounts of types $\yvec \in \mathbb{R}^{\newK}$ at slots $1$ to $|K|$. 

In what follows, we will consider the space $\mathcal{M}_1$ of Borel probability
measures on the standard $\newK-1$ simplex 
$$
S_{\newK}=\left\{z\in \mathbb{R}^{\newK} \middle| z\geq 0,\,\sum_{k=1}^{\newK} z_k=1\right\} .
$$
Furthermore, let, for any $k, l$, $H^k_l$ be the common Heaviside step function (i.e., taking value $0$ upon input below $0$, and value $1$ upon input above $0$), defining it in a special way for the point $0$ (to implement a tie breaking rule that favors lower slots $k \in K$):
\[
H^k_l(0) := \left\{\begin{array}{ll}
1, & \text{ if $k \leq l$,} \\
0, & \text{ if $ k > l$.}
\end{array}
\right.
\]

First we note that for any measure $\mu \in \mathcal{M}_1$
the expected
proportion of users choosing slot $k$ conditioned on $\Mpara=\mpara$ with
assistant prediction $A=\mu$ is
\[
F^{\Fcomp}_{\mu}(\mpara):=\int\left[\prod_{l\neq k} H_l^k\left(\int \Diffu_l^k(i,\yvec)d\mu(\yvec)\right)\right]r(i|\mpara)di\,,
\]
with $\Diffu_l^k(i,y)=\Ua_{i}(y,k)-\Ua_{i}(y,l)$. 
Given our measurability
assumptions, $$F_{\mu} := (F_\mu^1, \ldots, F_\mu^{\newK})$$ is well defined and measurable, such that the
pushforward measure of $\Mpara$ by it is also a Borel probability measure.

Let $\mu,\,\mu_{1},\,\mu_{2},\,\dots\,\in\mathcal{M}_{1}$,
we say that $\left(\mu_{n}\right)_{n\in\mathbb{N}}$ \emph{converges
	weakly} to $\mu$ if for any $f$ continuous on $S_{\newK}$ 
\[
\int fd\mu_{n}\underset{n\rightarrow+\infty}{\longrightarrow}\int f(y)d\mu\,.
\]

Weak convergence induces the weak topology $\tau$ on $\mathcal{M}_{1}$,
and $\mathcal{M}_{1}$ is compact for this topology (see for
example \cite[Section 13.2]{klenke2013probability}). As a consequence $\mathcal{M}_{1}$
is a non-empty compact convex set of the locally convex topological vector space of bounded signed
measures on $S_{\newK}$.

In order to prove the existence of a fixed point in $\mathcal{M}_{1}$,
according to Leray-Schauder-Tychonoff fixed point theorem \cite[p151]{reed1972methods},
what remains is to prove that the mapping 

\[
\mapp: \left\{\begin{array}{ccc}
\mathcal{M}_{1} & \rightarrow & \mathcal{M}_{1}\\
\mu & \mapsto & P(F_{\mu}(\Mpara)) 
\end{array}
\right.
\]
is continuous for the above defined weak topology. 

Consider $\mu_{n}\rightarrow\mu$ (for this weak topology). We have
to show that for any $f$ continuous on $S_{\newK}$
\[
\int fd \mapp(\mu_{n})\rightarrow\int fd \mapp(\mu).
\]

We rewrite the left-hand side (using basic change of variable in the
Lebesgue integral), assuming $P_{\Mpara}$ has a bounded density $p_{\Mpara}$
with respect to Lebesgue measure:

\[
\int fd \mapp(\mu_{n})=\mathbb{E}\left[f(F_{\mu_{n}}(\Mpara))\right]=\int f(F_{\mu_{n}}(\mpara))p_{\Mpara}(\mpara)d\mpara .
\]

Since $f$ is also uniformly continuous on this (compact) simplex,
proving uniform convergence on $S_{\newK}$ of $F_{\mu_{n}}$ to $F_{\mu}$
will be enough to conclude. Let us fix an $\epsilon$, we want to
bound $\left\|F_{\mu_{n}}-F_{\mu}\right\|$ by $\epsilon$ uniformly
over $S_{\newK}$. 
Since $S_{\newK}$ is included in a finite dimensional Euclidean space, a uniform bound on each component will be enough to conclude.
We first note that for any component $k$
\begin{equation}\label{eq:Fdifference}
|F^k_{\mu_{\newk}}(\mpara)-F^k_{\mu}(\mpara)|\leq\int\left|\prod_l H_l^k\left(\int \Diffu_l^k(i,\yvec)d\mu_{n}(\yvec)\right)-\prod_l H_l^k\left(\int \Diffu_l^k(i,\yvec)d\mu(\yvec)\right)\right|r(i|\mpara)di
\end{equation}

We notice we can rewrite the difference of Heaviside products as
\begin{align}
\sum_m \prod_{l<m} H_l^k\left(\int \Diffu_l^k(i,\yvec)d\mu(\yvec)\right) \cdot  \left(H_m^k\left(\int \Diffu_m^k(i,\yvec)d\mu_{n}(\yvec)\right)-H_m^k\left(\int \Diffu_m^k(i,\yvec)d\mu(\yvec)\right)\right) \notag \\
\cdot \prod_{l>m} H_l^k\left(\int \Diffu_l^k(i,\yvec)d\mu_{n}(\yvec)\right) \label{eq:prodDecomp}
\end{align}
such that we can bound the absolute difference of Eq.~(\ref{eq:Fdifference}) using (based on the terms of the product that are not differences being at most 1 anyway)
\begin{equation}\label{eq:proddiff}
|F^k_{\mu_{\newk}}(\mpara)-F^k_{\mu}(\mpara)|\leq \sum_m \int\left| H_m^k\left(\int \Diffu_m^k(i,\yvec)d\mu_{n}(\yvec)\right)- H_m^k\left(\int \Diffu_m^k(i,\yvec)d\mu(\yvec)\right)\right| r(i|\mpara)di\,.
\end{equation}
We will thus focus first on bounding an arbitrary term 
\begin{equation}\label{eq:proddiffoneterm}
\int \left|H_m^k\left(\int \Diffu_m^k(i,\yvec)d\mu_{n}(\yvec)\right)-H_m^k\left(\int \Diffu_m^k(i,\yvec)d\mu(\yvec)\right)\right|r(i|x)di\,,
\end{equation}
 dropping the indices $m$ and $k$ to ease notations in the following paragraph.

Our assumptions (Setting \extref{set:agg}) imply any $\Diffu_m^k$ is a polynomial in $(i,y)$, that can be written $\sum_{m=0}^d i^m q_m(y)$. Then integrating
the quantity inside each $H$ yields polynomials in $i$, $p(i)$ and $p_n(i)$, of maximum order $d$, whose coefficients are a linear combination
of the moments of $\mu$ and $\mu_{n}$ respectively, up to some order
$d'$. Convergence of $\mu_{n}$ to $\mu$ thus guarantees convergence
of the coefficients of $\diffu_{n}$ to those of $\diffu$, and uniform convergence
of the $\diffu_{n}$ to $\diffu$ on the unit interval.

The discontinuity of $H$ does not allow us to further use uniform
continuity to bound the term of Eq.~(\ref{eq:proddiffoneterm}), but we notice that the absolute difference between the
two $H$ terms is either zero or one, the later occurring only when the signs of $\diffu_{n}$ an $\diffu$ differ. Using the assumption that there exists at least one $m$ such that $q_m(y)$ is constant and non-zero implies $\diffu$ is a non-zero polynomial. There is then only two possible cases to consider:
\begin{itemize}
	\item If $\diffu$ has no root on the unit interval (e.g. $\diffu$ is constant), then uniform convergence guaranties we can choose $N$ large enough such that $\diffu$ and $\diffu_{n}$ have the same sign on the unit interval, implying the difference in Heaviside function is zero on the whole interval and the corresponding term of Eq.~(\ref{eq:proddiffoneterm}) can be ignored. 
	\item Alternatively, $\diffu$ has a finite number of roots, and for $\newk$ large enough the difference inside the integral in Eq.~(\ref{eq:proddiffoneterm}) can be non-zero only on a finite
	number of intervals surrounding these roots, where the sign of the $\diffu$ and $\diffu_{\newk}$ may differ. We will thus focus on this case and show the length of these intervals can be bounded.
\end{itemize} 
Let $\left(i_{l}\right)$ be all the (finite) collection of roots
of $\diffu$ on the unit interval, then there exists a $\eta_{0}$
thus that $\diffu$ is strictly monotonous in all right and left $\eta_{0}$-neighborhoods
of each $i_{l}$ (one-sided neighborhoods are needed for roots with even multiplicity), and thus admits a family of one-sided monotonous continuous local inverse
functions $\left(\kappa_{l}^{+},\,\kappa_{l}^{-}\right)$, up to a change in sign, such that for each $l$,
\[
\kappa_{l}^{+}(|\diffu(i)|)=i-i_{l},\quad i-i_{l}\in[0,\,\eta_{0}]
\]
and
\[
\kappa_{l}^{-}(|\diffu(i)|)=i-i_{l},\quad i-i_{l}\in[-\eta_{0},\,0]
\]
(note continuity of the inverse is guaranteed by continuity and strict
monotonicity, while the implicit function theorem does not directly
apply at multiple roots due to vanishing of the derivative). Let $\epsilon_{0}'$
be the maximum radius such that the interval $[0,\,\epsilon_{0}']$
is included in the intersection of the domains of all $\kappa_{l}^{-}$ and $\kappa_{l}^{+}$. We additionally choose $\epsilon_0\leq \epsilon_0'$ such that $|p(i)|>\epsilon_0$ for any $i$ outside the union of intervals  $[i_l+\kappa_{l}^{-}(\epsilon_0),\,i_l+\kappa_{l}^{+}(\epsilon_0)]$ associated to each root $l$ (this can be done by picking the minimum between $\epsilon_0'$ and the lower bound of $|p|$ outside of the neighborhoods of each root).
Let us choose $N_{0}$ such that for $n>N_0$, $|\diffu-\diffu_{n}|<\epsilon_{0}$ uniformly
on the unit interval. Then the Lebesgue measure ($\lambda(I_{\epsilon_{0}}^n)$) of $I_{\epsilon_{0}}^{n}$,  
the union of all intervals such that
\[
\left|H_m^k\left(\int \Diffu_m^k(i,\yvec) d\mu_{\newk}(\yvec)\right)-H_m^k\left(\int \Diffu_m^k(i,\yvec) d\mu(\yvec)\right)\right|=\left|H_m^k\left(\diffu_{n}(i)\right)-H_m^k\left(\diffu(i)\right)\right|>0\,(=1)\,,
\]
is inferior to $\sum_{l}\left|\kappa_{l}^{+}(\epsilon_{0})-\kappa_{l}^{-}(-\epsilon_{0})\right|$. This is because outside of $I_{\epsilon_{0}}^{n}$, $p$ is at least as far away from $0$ as $\epsilon_0$, so $p_n$ has he same sign as $p$.
By (uniform) continuity of all $\kappa_{l}^{+}$ and $\kappa_{l}^{-}$, for $\eta_{1}$ arbitrary small, we can choose $\epsilon_{1}<\epsilon_{0}$
and $N_{1}>N_{0}$ such that for $n>N_1$,  $\lambda(I_{\epsilon_{1}}^n)<\eta_{1}$, such that, since $r(i|\mpara)$ is continuous (and thus bounded) on $[0,\,1]^{2}$, we get
\begin{equation}
\int\left| H_m^k\left(\int \Diffu_m^k(i,\yvec)d\mu_{n}(\yvec)\right)- H_m^k\left(\int \Diffu_m^k(i,\yvec)d\mu(\yvec)\right)\right| r(i|\mpara)di\leq \eta_1 \max (r)\,.
\end{equation}

As this procedure can be done for all $m\neq k$,
, we can bound the $k$-th component of $F_\mu$ using 
\[
|F^\Fcomp_{\mu_{n}}(\mpara)-F^\Fcomp_{\mu}(\mpara)|\leq (|K|-1)\eta_{1}\max(r)\,,
\]
for $\eta_{1}$ arbitrary small. We thus get a uniform bound for $\|F_{\mu_{n}}(\mpara)-F_{\mu}(\mpara)\|$, which is enough to ensure that $\mapp$ is continuous for the weak topology. 

This
implies the existence of a fixed point of $\mapp$ in $\mathcal{M}_{1}$
according to the Leray-Schauder-Tychonoff fixed point theorem \cite[p151]{reed1972methods}.

\end{proof}

\section{Proofs and extensions for Section \extref{sec:alg}}

\subsection{Extended version of the proposition and proof for Section \extref{sec:expodamp}}
\label{sec:pr_stab}

Let us state and proof a proposition that is a slight generalization of Proposition \extref{prop:expod}.

\begin{SuppProposition}[Optimality and Convergence Rate of Expodamp -- Generalized Version of Proposition \extref{prop:expod}]
\label{supp:prop:expodfull}
In the dynamic \aggr (Section \extref{sec:defdyn}), let Assumption \extref{asm:exposm} hold true.
\begin{citem}
	\item \emph{Stochastic case:} 
	Let the assistant's policy  $\pi$ be defined by 
	\begin{align}
	A_\pi^0 &:= \gamma (1-\beta)^{-1} \E(\tth^0), \label{supp:eqn:def0}\\
	A_\pi^{t+1} &:= A_\pi^t +  \gamma (1-\beta)^{-1} Q_t  ( Y^t - A_\pi^t ) , \label{supp:eqn:def1}  
	\end{align}
	for all $t\geq 0$
	with $Q_t$ some function of the covariance structure as detailed in Eq.~\ref{supp:eqn:Q} of the proof of this proposition. 
	\spar
	In particular, $Q_t$ is such that, if there is no observation noise in the latent-state model, i.e., $\var(E_{Y}^t) = 0$, then $\pi$ coincides with Expodamp (Algorithm \extref{alg:expodamp}) when setting $\alpha:=(1-\beta)^{-1}$ for the true $\beta$ of Eq.~\extref{eqn:gamma}. \spar
	Assume $\E(E_{Y}^t) = 0, t \geq 1$.
	Then%
	\footnote{One can also make the more general statements about $\pi$ minimizing the \emph{cumulative (over time)} loss.
		To see that the ``local'' statement (for individual $t$) implies more global statements observe two things: First, $A^t$ influences only $Y^t$ but no future $Y^{t'}, t' > t$, and so term-wise optimization coincides with cumulative optimization. Second, $\pi$ as defined above does the optimal thing at stage $t$ regardless of how $A^{t'}, t' < t$ was picked, in case we feed what $\pi$ would have outputted at stage $t-1$, instead of the actual $A^{t-1}$, into $\pi$ at stage $t$.}%
	, at each stage $t$, $\lpointpredt = 0$ and
	\begin{align}
	A^{t} {=} \arg\min_{a'} \E((A^{t} {-} Y^{t})^2|\doc{A^{t}{=}a'}, A^{0:t-1}\!, Y^{0:t-1}).  \label{supp:eqn:sa}
	\end{align}
	%
	%
	%
	%
	%
	\item \emph{Deterministic case:}
	Let the assistant's policy $\pi$ be Expodamp (Algorithm \extref{alg:expodamp}).
	Assume that $\tth^t = \bth$ is constant for $t \geq 0$, that $\beta = (1-\gamma)$ and that $E_{\tth}^t = E_{Y}^t = 0$.
	Then 
	\begin{align*}
	Y^{t} &= \bth +(1-\gamma)(A^0-\bth)(1-\alpha\gamma)^t , \text{  for all $t\geq 0$}. 
	\end{align*}
	That is, $Y^{t}$ converges exponentially with rate $\gamma\alpha$ towards the ``optimum''/fixed point $\bth$ (and thus also $A^t$ converges to $x$ based on Expodamp's formula) iff
	$0<\gamma\alpha<2$.
\end{citem}
\end{SuppProposition}

\begin{proof}[Proof of Proposition \ref{supp:prop:expodfull}]

\parag{First part of the proposition -- stochastic case:}

\todo{improve writing towards end by explcityly referring to eq47? what do we want to show here Lpartpred and the other objective (but why)... (3)} 

\todo{Problems with referencing: Eq. \ref{supp:eqn:ff1}}

\emph{\textbf{Prerequisites:}}

Consider the complete dynamical system, consisting of Assumption \extref{asm:exposm}, the state-space model (without assistant's behavior), together with Eq.~\extref{eqn:ye}, the assistant's behavior under policy $\pi$.
For this model let, for $Z \in \{Y, \tth\}$ and $t\geq t'$,
\begin{align}
Z^{t|t'} &:= \E_\pi( Z^t | Y^{0:t'}, A^{0:t'} ) \label{supp:eqn:eisbaer}\\
\Sigma^{t|t'}_{Z} &:= \var_\pi( Z^t | Y^{0:t'}, A^{0:t'} ) .
\end{align}
To be as explicit as possible, note that, for $Z \in \{Y, \tth\}$ and $t\geq t'$ (due to the causal structure)
\begin{align*}
\E_\pi( Z^t | Y^{0:t'}, A^{0:t'}=a^{0:t'} ) \left( = \E_\pi( Z^t | Y^{0:t'}, \doc{A^{0:t'}=a^{0:t'}}) \right) = \E( Z^t | Y^{0:t'}, a^{0:t'}) , \label{supp:eqn:pom}
\end{align*}
where the latter expectation is taken in the POMDP model of Assumption \extref{asm:exposm} when setting $A^t$ to constants $a^t$, for $t\geq 0$, and \emph{not plugging in any assistant policy}.
The analogous holds for $\var_\pi( Z^t | Y^{0:t'}, A^{0:t'} )$.
So we can use the classical Kalman filter recursive equations \citep[Section 18.3.1]{Luetkepohl2006}, which hold for the POMDP model, and thus, based on Eq.~\ref{supp:eqn:pom}, also for $Z^{t|t'}, \Sigma_Z^{t|t'}$, for $Z \in \{Y, \tth\}$ and $t\geq t'$ defined in the complete dynamical system \emph{including} assistant policy.
Specifically, the relevant equations are as follows:
\begin{align}
\hat{\tth}^{t+1|t} 
&= \hat{\tth}^{t|t} \\
&= \hat{\tth}^{t|t-1} + Q_t (Y^t - \hat{Y}^{t|t-1})\\
&= \hat{\tth}^{t|t-1} + Q_t  ( Y^t - \gamma \hat{\tth}^{t|t-1} - \beta A^t) ,  \label{supp:eqn:ef1}
\end{align}
for 
\begin{align}
Q_t &:= \gamma \Sigma_{\tth}^{t|t-1} (\Sigma_Y^{t|t-1})^{-1} \label{supp:eqn:Q} \\
&= \gamma \Sigma_{\tth}^{t|t-1}  (\gamma^2 \Sigma_{\tth}^{t|t-1} + \var(E_{Y}^0) )^{-1} . \label{supp:eqn:af1}
\end{align}
Note that $Q_t$ does not depend on the $A^t, t \in \N$, when considering $A^t, t \in \N$ as a parameter. 
%
%

\emph{\textbf{Showing Eq.~\ref{supp:eqn:sa}:}}

Assume the conditions of the proposition, i.e., that the assistant's policy $\pi$ is defined by Eq.~\ref{supp:eqn:def0} and \ref{supp:eqn:def1} (for convenience we may drop the subscript $\pi$ of $A^{t}_\pi$ in what follows),
with $Q_t$ from Eq.~\ref{supp:eqn:Q}. 
and Assumption \extref{asm:exposm}.


Let us show via induction that
\begin{align}
A^{t} &= \gamma (1-\beta)^{-1} \hat{\tth}^{t|t-1}, \text{ for all $t\geq 0$} , \label{supp:eqn:tf}
\end{align}
where we let $\hat{\tth}^{0|1} := \E(\tth^0)$.

\emph{Base case:} For $t=0$, the statement holds by definition.

\emph{Induction step:}
Assume the statement holds for $t$.
Then we have 
\begin{align}
A^{t+1} 
&= A^t +  \gamma (1-\beta)^{-1} Q_t  \left( Y^t - A^t \right) \\
&=\gamma (1-\beta)^{-1} \left( \gamma^{-1} (1-\beta) A^t + Q_t  ( Y^t -  A^t) \right) \\
&=\gamma (1-\beta)^{-1} \left( \gamma^{-1} (1-\beta) A^t + Q_t  ( Y^t - \gamma \gamma^{-1} (1-\beta) A^t  - \beta A^t) \right) \\
&=\gamma (1-\beta)^{-1} \left( \hat{\tth}^{t|t-1} + Q_t  ( Y^t - \gamma \hat{\tth}^{t|t-1}  - \beta A^t) \right) \label{supp:eqn:a1} \\
&= \gamma (1-\beta)^{-1} \hat{\tth}^{t+1|t}, \label{supp:eqn:a2}
\end{align}
where 
Eq.~\ref{supp:eqn:a1} is due to the inductive assumption, and
Eq.~\ref{supp:eqn:a2} is based on Eq. \ref{supp:eqn:ef1}.
This completes the induction for Eq.~\ref{supp:eqn:tf}.

Now observe that the statement we need to show, Eq.~\ref{supp:eqn:sa}, is equivalent to
\begin{align}
0 &= \frac{d}{d a} \E((A^{t} - Y^{t})^2|\doc{A^{t}=a}, A^{0:t-1}, Y^{0:t-1}) \\
&= \frac{d}{d a} \E((A^{t} - \beta A^{t} - \gamma \tth^{t} - E_{Y}^{t})^2|\doc{A^{t}=a}, A^{0:t-1}, Y^{0:t-1}) \\
&= \frac{d}{d a} \E( ((1 - \beta) A^{t} - \gamma \tth^{t} - E_{Y}^{t})^2|\doc{A^{t}=a}, A^{0:t-1}, Y^{0:t-1}) \\
&= \frac{d}{d a} \E( ((1 - \beta) a - \gamma \tth^{t} - E_{Y}^{t})^2  | A^{0:t-1}, Y^{0:t-1}) \\
&=  \E( \frac{d}{d a} ((1 - \beta) a - \gamma \tth^{t} - E_{Y}^{t})^2   |A^{0:t-1}, Y^{0:t-1}) \\
&=  \E( 2 ((1 - \beta) a - \gamma \tth^{t} - E_{Y}^{t}) (1-\beta)   |A^{0:t-1}, Y^{0:t-1}) \\
&= 2 (1-\beta) \E(  (1 - \beta) a - \gamma \tth^{t} - E_{Y}^{t}   |A^{0:t-1}, Y^{0:t-1}) ,\label{supp:eqn:si} 
\end{align}
which in turn is equivalent to
\begin{align}
a &= (1-\beta)^{-1} \left( \E(\gamma \tth^{t} | A^{0:t-1}, Y^{0:t-1}) + \E(E_{Y}^{t}) \right) 
\end{align}
which in turn is equivalent to (based on our assumption $\E(E_{Y}^t) = 0, t \geq 1$)
\begin{align}
a &= \gamma (1-\beta)^{-1} \E(\tth^{t+1} | A^{0:t-1}, Y^{0:t-1}) , \label{supp:eqn:ff4}
\end{align}
which in turn is equivalent to (simply plugging in the definition in Eq.~\ref{supp:eqn:eisbaer})
\begin{align}
a &= \gamma (1-\beta)^{-1} \hat{\tth}^{t|t-1} . \label{supp:eqn:ff41}
\end{align}
Since we know, based on Eq.~\ref{supp:eqn:tf}, that $A^t$ under $\pi$ satisfies Eq.~\ref{supp:eqn:ff41} when plugging it in for $a$, based on the chain of equivalences above, we also know that it satisfies Eq.~\ref{supp:eqn:sa}, which is what needed to be shown.

The statement that Expodamp (Algorithm \extref{alg:expodamp}) is a special case of the assistant policy defined in Eq.~\ref{supp:eqn:def0} and \ref{supp:eqn:def1} can easily be seen as follows:
If there is no observation noise in the latent-state model, i.e., $\var(E_{Y}^t) = 0$, then Eq. \ref{supp:eqn:af1} implies that $Q_t = \gamma^{-1}$.
Hence, when setting $\alpha = (1-\beta)^{-1}$, we have
\begin{align}
A^{t+1} &= A^t + (1-\beta)^{-1} \left( Y^t - A^t \right) = A^t + \alpha \left( Y^t - A^t \right) ,
\end{align}
i.e., we get Expodamp (Algorithm \extref{alg:expodamp}) as special case.
\emph{\textbf{Showing that $\lpointpredt_\pi=0$:}}

To also show the first statement, $\lpointpredt_\pi=0$, observe that this means
\begin{align}
0  &= \E \left( ( A^{t} - \E( Y^{t} |A^{t}, A^{0:t-1}, Y^{0:t-1} ) )^2 \right)
\end{align}
which is equivalent to 
\begin{align}
A^{t} &= \E( Y^{t} |A^{t}, A^{0:t-1}, Y^{0:t-1})
\end{align}
almost everywhere.
This in turn is equivalent to
\begin{align*}
a &= \E( Y^{t} |A^{t}=a, A^{0:t-1}, Y^{0:t-1})
\end{align*}
for all $a$,
which is equivalent to
\begin{align*}
0 &= a - \E( Y^{t} |A^{t}=a, A^{0:t-1}, Y^{0:t-1}) \\
&= \E( a - Y^{t} |A^{t}=a, A^{0:t-1}, Y^{0:t-1}) \\
&= \E( a - \beta a - \gamma \tth^{t} - E_{Y}^{t} | A^{0:t-1}, Y^{0:t-1}) \\
&= \E( (1-\beta) a - \gamma \tth^{t} - E_{Y}^{t} | A^{0:t-1}, Y^{0:t-1}) .
\end{align*}
This is equivalent to Eq.~\ref{supp:eqn:si}, which was equivalent to Eq.~\ref{supp:eqn:ff41}, which, as stated above, is satisfied by $\pi$. 


%
%

%
%
%
%
%
%
%
%

%
%
%
%
%
%

\parag{Second part of the proposition -- deterministic case:}

\newcommand{\Xlike}{X}
\newcommand{\xlike}{x}
\newcommand{\barxlike}{x}

In this proof (and only here) let us, for simplicity, use the following notation:
\begin{itemize}
	\item $Y^t, t \in \N$ denotes a sample path (instead of a random process),
	\item and $Y$ denotes $(Y^t)_{t \in \N}$.
\end{itemize}

Let $\widetilde{Y}$ denote the one-sided Z-transform of $Y$ \cite{proakis1996}, defined as the Laurent series (considered formally without considerations on the domain of convergence)
\begin{align*}
\widetilde{Y}(z) = \sum_{k=0}^{+\infty} Y^k z^{-k}, z\in \mathbb{C} ,
\end{align*}
and similarly for $X, A$.
The assumed dynamics equation 
expressed in the Z-domain leads to
\begin{align*}
\widetilde{Y}(z) &= (1-\gamma)\widetilde{A}(z)+\gamma \widetilde{\Xlike}(z)\,. \\
\end{align*}
The equation that defines Expodamp 
implies (using the time-shifting formula \cite[p. 208]{proakis1996}) 
\begin{align*}
z\left(\widetilde{A}(z)-A^0\right) &= (1-\alpha)\widetilde{A}+\alpha \widetilde{Y}\,. \\
\end{align*}

Combining the above equations results in the following expression for $Y$ in the Z-domain:
\begin{align*}
\widetilde{Y}(z) &= \frac{(\gamma-1)\barxlike}{1-z^{-1}(1-\alpha\gamma)}+\frac{\barxlike}{1-z^{-1}}+\frac{(1-\gamma)A^0}{1-z^{-1}(1-\alpha\gamma)}\,.\\
\end{align*}

By classical inversion formulas of the Z-transform, we finally get
\begin{align*}
Y^{t} &= \barxlike +(1-\gamma)\left[A^0-\barxlike\right](1-\alpha\gamma)^t   \,,t\geq0\,. \\
\end{align*}
which shows the exponential convergence for any $(A^0,\,\barxlike)$ under the condition $0<\gamma\alpha<2$

%
%
%
%

\end{proof}

\subsection{Full algorithms, proposition and proof for Section \extref{sec:partpred}}
\label{sec:pr_partpred}

In this section, let us give the ``equilibrium selection objective'' (Section \extref{sec:pre}) a formal loss function:
\begin{align}
\lnet_\pi := \left \{ \begin{array}{ll} 0, & \text{  if $s_\pi$ 
is a BNE of $G$ (w.r.t.\ the variables of the $t$-th stage) } \\ 1, & \text{ else. } \end{array} \right.
\end{align} 
Furthermore, let NE stand for (complete-information) Nash equilibrium.

\subsubsection{Detailed algorithms}

Consider Algorithm \ref{supp:alg:partpredfull} together with Algorithms \ref{supp:alg:consistupdate} and \ref{supp:alg:updategeneral}, respectively, as subroutines.
It is a rigorous  version of Algorithm \extref{alg:partpredsk} that is also more general in that it allows $V$ to vary.

\floatname{algorithm}{}
\begin{algorithm} 
\caption{$\an{Partpred}$}
\label{supp:alg:integrated}
\label{supp:alg:partpred}
\label{supp:alg:partpredfull}
\KwIn{parameters: $\bar{A} = ( \bar{A}_{\acov} )_{\acov \in \dom_\Acov}$, $r$, $\an{UpdateFunction}$}
For each $\acov \in \dom_\Acov$, initialize $a_\acov \in \bar{A}_{\acov}$ randomly\\
\For{each stage $t\geq 0$}{
	\KwIn{$\acov:= \Acov^{t}$}
	\eIf{$a_\acov$ has been announced less than $r$ times or has converged under $\Acov=\acov$}{
		\KwOut{$A^t := a_\acov$}
	}{
		Let $\hat{P}_{\acov, a_\acov}^r$ be the empirical distribution of $\Uact$ in the $r$ times it was sampled under $\Acov=\acov, A=a_\acov$\\
		Let $a_\acov' := \arg\min_{a_\acov'' \in \bar{A}_{\acov}} \| a_\acov'' - \hat{P}_{\acov, a_\acov}^r \|$\\
		Let $a_\acov'' := \an{UpdateFunction}(a_\acov, a_\acov', \acov)$\\ 
		\uIf{$a_\acov'' = a_\acov$}{
			Remember that for $\Acov=\acov$, convergence happened 
		}
		\uElseIf{$a''_\acov$ and all other $a_\acov \in \bar{A}_\acov$ have been tried $r$ times \label{line:forces}}{
			Set $a_\acov'' := \arg\min_{a_\acov'} \| a_\acov' - \hat{P}_{\acov, a_\acov'}^r \|$\\
			Remember that for $\Acov=\acov$, convergence happened \label{line:forcee}
		}
		\uElseIf{$a''_\acov$ has been tried $r$ times}{
			Pick unused $a_\acov'' \in \bar{A}_\acov$ at random \label{code:rands}
		} 
		Set $a_\acov = a''_\acov$ \label{code:rande}\\
		\KwOut{$A^t := a_\acov$}
	}
}
\end{algorithm}

\begin{algorithm}
\caption{\an{UpdateFunctionCongestion}}
\label{supp:alg:consistupdate}
\tcp{For simplicity, consider $a_\acov, a_\acov'$ as action profiles in $\dom_{\Uact}$ instead of (Dirac) distributions over action profiles}

\KwIn{$a_\acov, a_\acov', \acov$}
Let $a_\acov'' := a_\acov$ and $J := \emptyset$\\
\While{There is $i \in \users$ s.t. $[a_\acov]_i \neq [a_\acov]_j$ and $[a_\acov']_i \neq [a_\acov']_j$ for all $j \in J$}{
	$J := J \cup \{i\}$\\
	$[a_\acov'']_i := [a_\acov']_i$
}
\KwOut{$a_\acov''$}
\end{algorithm}

\begin{algorithm}
\caption{\an{UpdateFunctionGeneral}}
\label{supp:alg:updategeneral}
\tcp{For simplicity, in what follows, let $[a_\acov]_i$ denote the marginal distribution of $\Uact_i$ under $a_\acov$}

\KwIn{$a_\acov, a_\acov', \acov$}
Let $a_\acov'' := a_\acov$\\
\If{There exists an $i \in \users$ s.t. $[a_\acov]_i \neq [a_\acov']_i$}{
	pick one such $i$ at random, if there are several \\
	$[a_\acov'']_i := [a_\acov']_i$
}
\KwOut{$a_\acov''$}
\end{algorithm}

\todo{OLD:
The assistant we propose for this setting is \defi{Consistresp} described in Algorithm \ref{supp:alg:consistrep}.
The intuition is that, based on our assumptions, the assistant knows w.r.t. which belief the customers best-responded. This, together with the ``prior'' on the payoff structure (congestion game) allows to reason and select \emph{``consistent''} responses whose underlying beliefs do ``not collide''. The following proposition is proved (in generalized form) in Section \extref{sec:pr_improvement_selection}.}


\subsection{Generalized proposition and proof}
\label{sec:pr_improvement_selection}

Let us state a proposition that generalizes Proposition \extref{prop:consistresp}.

\todo{introduce $\lnet$ and/or $\lne$ in case the defi was dropped in the main paper}

\begin{Proposition}[Convergence of Algorithm \ref{supp:alg:partpred}; sketch]
\label{prop:regret}
\label{prop:partpred_stoch}
\label{prop:partpred}
In the dynamic small-scale setting (Section \ref{sec:defdyn}), assume $X^t$ to be independent of $X^{1:t-1}$, and that in $\gat$ exists a strict BNE.
Then the following holds true:
\todo{update this based on the paper versions}
\begin{citem}

	\item \emph{General stochastic case:} Let, for all $\acov$,
	\begin{align}
	\bar{A}_\acov := \{ P_{G, s}(\Uact|\acov) : \text{ $s$ is a (deterministic) strategy profile of the \bg $\gat$} \}. \label{supp:eqn:ba}
	\end{align}
	(Note that $\bar{A}_\acov$ is finite since the range of all variables $\Usig, \Uact$ is finite.)
	Let $\bar{A} := ( \bar{A}_{\acov} )_{\acov \in \dom_\Acov}$. 
	Let the assistant's policy $\pi_r$ be $\vn{Partpred}(\bar{A}, r, \an{UpdateFunctionGeneral})$ as defined in Algorithm \ref{supp:alg:partpred} 
	with $\an{UpdateFunctionGeneral}$ as defined in Algorithm \ref{supp:alg:updategeneral},
	and $\bar{A}$ as defined above. 
	
	Then, for any $\varepsilon > 0$, there exists $R, T$ such that for all $r > R, t > T$, it holds that $P( \lpredt_{\pi_r} = 0 ) > 1 - \varepsilon$ and $P( \lnet_{\pi_r} = 0 ) > 1 - \varepsilon$.

	\item \emph{Directed convergence in in complete-information congestion game case:} 
	(Note that a version of this part of the proposition can be formulated where not best, but just improving responses are assumed for the customers, which can even speed up convergence in certain cases.)
	Let $\Usig$ be fully determined by $\Acov$ and for each value of $\Acov$, let the (complete information) game $\gat$ be a congestion game \citep{roughgarden2016twenty} where all Nash equilibria are strict. 
	For simplicity, in this deterministic setting, assume $A \in \dom_{\Uact}$ (i.e., an action profile instead of (Dirac) distributions over action profiles).
	Let the assistant's policy $\pi_r$ be given by $\vn{Partpred}(\bar{A}, r, \an{UpdateFunctionCongestion})$ (Algorithm \ref{supp:alg:partpred}) 
	with $\an{UpdateFunctionCongestion}$ as defined in Algorithm \ref{supp:alg:consistupdate},
	and $\bar{A} = \dom_{\Uact}$ the set of all action profiles.
	Then $\lpredt_{\pi_r}, \lnet_{\pi_r} \to 0$ for $t \to \infty$ without ever invoking line \ref{code:rands}, i.e., without needing ``undirected'' search.

\end{citem}


\maybe{Maybe (in the supplement) state a version where (1) $\Acov_i$ that would be available to user $i$ depends on the time the users accesses the web page; (2) the assistant provides forecasts based on the time of request; (3) to not have to completely switch to a intra-day-dynamics setting, assume for simplicity that the request vector is known in advance (in $\Acov$). Or rather assume that the requests actually happen completely before the actions; i.e., it's not the case that $A$ is updated based on observed actions; it's only updated based on received requests. $\Acov=(\Acov_daytime=1, \Acov_daytime=2, ....)$}
%
%
%
\end{Proposition}

\begin{proof}[Proof for Proposition \ref{prop:partpred}]

\newcommand{\Ngroups}{J}

\stress{First part of the proposition: General stochastic case:}

\stress{Prerequisites.}

Let $P_{\acov, a_\acov} := P_M(\Uact|\Acov=\acov, A=a_\acov)$.
Keep in mind that, as usual, by a fixed point/self-fulfilling prophecy under $\Acov=\acov$ we mean $a_\acov$ with $P_{\acov, a_\acov} = a_\acov$.
By assumption, there exists a strict BNE in $\gat$.
Then Corollary \extref{cor:at} implies that there is $\pi$ with $0 = \lpred_\pi = \E( d(P_{M}(\Uact|\Acov, A=\pi(\Acov)), \pi(\Acov)) )$. Hence, for each $\Acov=\acov$ there exists a fixed point.

Now let $\acov$ be arbitrary but fixed.
Keep in mind that by a \defi{(same-covariate, same-prediction) group (of stages)} we mean the subsequence of $R$ stages $(t^\acov_j)_j$ where $\Acov^t=\acov$ and $A^t = a_\acov$ for some $a_\acov \in \bar{A}_\acov$.
Furthermore, let us say \defi{the algorithm converges at that and that group of stages with covariate $\acov$}, if after that group of stages it will always output the same $a_\acov$.
Let $\Ngroups := | \bar{A}_\acov |$. 
Let $d := \min_{a_\acov, a_\acov' \in \bar{A}_\acov} \| P_{\acov, a_\acov} - P_{\acov, a_\acov'} \|$.

\emph{Observe that the algorithm certainly converges in finite time -- at the latest after sampling has happened $r$ times (corresponding to one group) under all $a_\acov \in \bar{A}_\acov$, i.e., after $\Ngroups r$ stages.}
So we have to show that with growing $R$ the probability that the reason for convergence is not that it found an actual fixed point (self-fulfilling prophecy) goes to zero.
Observe that in order for it to not converge due to finding an actual fixed point either of the following two events has to happen:
\begin{citem}
	\item the algorithm converges at some action that is not a fixed point by wrongly taking it for a fixed point;
	\item it converges after the $\Ngroups$ groups of stages by the criterion to force convergence after $\Ngroups$ (lines \ref{line:forces} to \ref{line:forcee}), and has missed the actual fixed point (or one of the actual fixed points).
\end{citem}
So it suffices to show for these events individually, that with growing $R$ the probability that they happen goes to zero.
%
%
%
%
%
%

\stress{Bound the probability that the algorithm converges at some action that is not a fixed point by wrongly taking it for a fixed point.}

Observe that during the phase where the algorithm has not converged yet, each $a_\acov \in \bar{A}_\acov$ is chosen as action $A$ during at most one group of stages and let us denote the corresponding empirical distribution of $\Uact$ by $\hat{P}_{\acov, a_\acov}$.

The phase where the algorithm has not converged yet consists of at most $\Ngroups$ groups of stages, and at most $\Ngroups-1$ groups of stages where $A$ is an $a_\acov$ that is not a fixed point. Given any $\varepsilon$, we have to show that there is $R$, such that for any $r > R$, we can bound the probability that the algorithm converges due to ``wrongly taking $a_\acov$ as a fixed point'' at the end any of these groups of stages by $\varepsilon$. We do so by bounding the probability that this happens at any individual group of the at most $\Ngroups-1$ groups where $a_\acov$ is not a fixed point, and then sum them up and apply the union bound.

Let $R \in \N$ be such that, for all $r > R$ and for all $a_\acov \in \bar{A}_\acov$ that are used during these most $\Ngroups-1$ groups where $a_\acov$ is not a fixed point: $P( \| P_{\acov, a_\acov} - \hat{P}_{\acov, a_\acov}^r \| > \frac{d}{2} ) < \frac{\varepsilon}{\Ngroups-1}$, with $\hat{P}_{\acov, a_\acov}^r$ for the respective used $a_\acov$ as defined in Algorithm \ref{supp:alg:partpred}. (Such $R$ exists based on the weak law of large numbers \cite{klenke2013probability} and the fact that $\bar{A}_\acov$ is finite.)

So let us fix one of these groups of stages where $A$ is an $a_w$ that is not a fixed point. In particular, $a_\acov \neq P_{\acov, a_\acov}$. (Keep in mind that nonetheless, $P_{\acov, a_\acov} \in \bar{A}_\acov$.)
For all $r > R$, the probability that the algorithm converges at the end of this group of stages coincides with (or rather: is bounded by) the probability that $\| a_\acov - \hat{P}_{\acov, a_\acov}^r \| \leq \| P_{\acov, a_\acov} - \hat{P}_{\acov, a_\acov}^r \|$. But 
\begin{align*}
&P\left( \| a_\acov - \hat{P}_{\acov, a_\acov}^r \| \leq \| P_{\acov, a_\acov} - \hat{P}_{\acov, a_\acov}^r \|  \right) \\
&\leq P( \| P_{\acov, a_\acov} - \hat{P}_{\acov, a_\acov}^r \| > \frac{d}{2} ) < \frac{\varepsilon}{\Ngroups-1}.
\end{align*}

To see why the inequality holds true, observe that the event $\| P_{\acov, a_\acov} - \hat{P}_{\acov, a_\acov}^r \| \leq \frac{d}{2}$ 
implies the event $\| a_\acov - \hat{P}_{\acov, a_\acov}^r \| \geq \frac{d}{2} \geq \| P_{\acov, a_\acov} - \hat{P}_{\acov, a_\acov}^r \|$. (To see the first inequality, assume otherwise. Then $\| P_{\acov, a_\acov} - a_\acov \| \leq \| P_{\acov, a_\acov} - \hat{P}_{\acov, a_\acov}^r \| + \| a_\acov - \hat{P}_{\acov, a_\acov}^r \| < d$, which contradicts what we assumed.)
%

So the probability that the algorithm converges at the end of any of these groups of stages (where $A$ is an $a_\acov$ that is not a fixed point) is bounded by $(\Ngroups-1) \frac{\varepsilon}{\Ngroups-1} = \varepsilon$. This is what had to be shown.

%
%
%
%
%
%
%
%

\stress{Bound the probability of convergence of the algorithm after the $\Ngroups$ groups of stages by the criterion to force convergence after $\Ngroups$, and having missed the actual fixed point (or one of the actual fixed points).}

What we have to do here is bound the probability that a fixed point is not taken as a fixed point.
Let us be more specific.
Given any trajectory of the algorithm with some ordering of the groups of stages, let $a_\acov \in \bar{A}_\acov$ be a (the first one, if there are several) fixed point, i.e., $a_\acov = P_{\acov, a_\acov}$, which is taken as $A$ at some point during the trajectory.
Given any $\varepsilon$, we have to show that there is $R$, such that for any $r > R$, we can bound the probability that $a_\acov$ is ``not recognized as a fixed point'' by $\varepsilon$.

Let $R \in \N$ be such that $P( \| P_{\acov, a_\acov} - \hat{P}_{\acov, a_\acov}^r \| > \frac{d}{2} ) < \varepsilon$ for all $r > R$ and for all $a_\acov \in \bar{A}_\acov$. (Such $R$ exists based on the weak law of large numbers \cite{klenke2013probability} and the fact that $\bar{A}_\acov$ is finite.)
Then for all $r > R$, the probability that it is not recognized as a fixed point is
\begin{align*}
&P\left( \| a_\acov' - \hat{P}_{\acov, a_\acov}^r \| < \| a_\acov - \hat{P}_{\acov, a_\acov}^r \| \text{ for some } a_\acov'  \right) \\
&=P\left( \| a_\acov' - \hat{P}_{\acov, a_\acov}^r \| < \| P_{\acov, a_\acov} - \hat{P}_{\acov, a_\acov}^r \| \text{ for some } a_\acov'  \right) \\
&\leq P( \| P_{\acov, a_\acov} - \hat{P}_{\acov, a_\acov}^r \| > \frac{d}{2} ) < \varepsilon.
\end{align*}
(Since $d$ is the minimum distance between $a_\acov'$ and $P_{\acov, a_\acov}$ -- the analogous argument as before.)

%
%
%
%
%
%

\stress{Finally.}

Now simply take $R, T$ large enough such that:
\begin{citem}
	\item With high probability, each $\Acov=\acov$ (with positive probability) has been observed at least $\Ngroups R$ times.
	\item Within the event that each $\Acov=\acov$ (with positive probability) has been observed at least $\Ngroups R$ times: for $r > R$, under algorithm $\pi_r$, the probability that converges against a fixed point occurred under all $\Acov=\acov$ (which is a product of $|\dom_\Acov|$ probabilities that each go to 1 with growing $r$, based on the above) is high enough.
\end{citem}

%
%
%

\todo{this last ``Finally'' seems correct but maybe read through it again.}



\stress{Second part of the proposition: Directed convergence in complete-information congestion game case:} 

We write down the proof for the case of a fixed $\Acov$. The general case works analogously.

Let $\Phi$ denote the \emph{potential function} (the bigger the utilities, the bigger the potential function) \citep{roughgarden2016twenty} of the congestion game (and thus potential game) $\gat$.

Let stage $t$, announcement $A^t=a$ and outcome $\Uact^t=\uact$ be arbitrary but fixed. 
In what follows, we say a set $E \subset \users$ of players is \emph{collision-free} if
(1) $\uact_i \neq \uact_j$ for any $i,j \in E$ (no two players in $E$ move to the same ``target'' slot), and
(2) $a_i \neq a_j$ for any $i,j \in E$ (no two players in $E$ move from the same ``source'' slot).
Let us denote
\begin{align*}
a^E := (b_i : b_i = \uact_i \text{ if } i \in E \text {, else } b_i=a_i, i \in \users) \in \mathcal{\Uact} ,
\end{align*}
i.e., applying all moves of players in $E$ to $a$.

\stress{Claim:} If $E \subset F \subset \users$ are collision-free, then $\Phi(a^F) {\geq} \Phi(a^E)$. So, roughly speaking, setting $A^{t+1}:=a^E$ for any collision-free $E$, such that no superset $F \supseteq E$ is collision-free, is a reasonable policy for the assistant.

To see why this holds, let $\uact^1=a, \uact^2, \ldots, \uact^k=a^E$ be a \defi{path} from $a$ to $a^E$, meaning that at each step $j$ from $\uact^j$ to $\uact^{j+1}$, only one player $i_j \in E$ applies her move $[\uact - a]_{i_j}$ to $\uact^{j}$.

For the potential function $\Phi$ \citep{roughgarden2016twenty} we have
\begin{align}
&\Phi(\uact^k) - \Phi(\uact^1) \\
&= \sum_j \Phi(\uact^{j+1}) - \Phi(\uact^{j}) \\
&= \sum_j u_{i_j}(\uact^{j+1}) - u_{i_j}(\uact^{j}) .
\end{align}

Hence, it suffices to show that $u_{i_j}(\uact^{j+1}) - u_{i_j}(\uact^{j}) \geq 0$ for all $j$, because then we cannot do better than $a^E$ by taking $a^F$ for any subset $F \subset E$. 
To prove this, we establish that for all $j$,
\begin{align}
u_{i_j}(\uact^{j+1}) - u_{i_j}(\uact^{j}) \geq u_{i_j}(\uact_{i_j}^{j+1}, \uact_{-i_j}^1) - u_{i_j}(\uact^1) \geq 0. \label{supp:eqn:ineq}
\end{align}
The second inequality directly follows from our assumption that player $i_j$ makes an improvement move.
To prove the first inequality, we show that for all $j$,
\begin{align}
u_{i_j}(\uact^{j+1}) &\geq u_{i_j}(\uact_{i_j}^{j+1}, \uact_{-i_j}^1), \label{supp:eqn:up1}\\
u_{i_j}(\uact^{j}) &\leq u_{i_j}(\uact^1) .\label{supp:eqn:up2}
\end{align}

Keep in mind that in the congestion game, the utility only depends on the number of other players at the same slot. 

For each $j$, based on the assumption that no two players move to the same slot, either the number of other players $l \neq i_j$ at slot $\uact_{i_j}^{j+1}$ in action profile $\uact^{j+1}$ is the same or it drops compared to $(\uact_{i_j}^{j+1}, \uact_{-i_j}^1)$, which implies Inequality \ref{supp:eqn:up1}.

Furthermore, for each $j$, based on the assumption that no two players move from the same slot, the number of other players $l \neq i_j$ at slot $\uact_{i_j}^{j}$ in action profile $\uact^{j}$ is the same or it increases compared to $c^1$, which implies Inequality \ref{supp:eqn:up2}. 

This is also the reason why we cannot allow two players to move from the same slot: because it could happen, that the change of circumstances due to the second one moving renders the move of the first one a worsening move.


\stress{Claim:} $\lpredt_{\pi_r}, \lnet_{\pi_r} \to 0$ for $t \to \infty$ without ever invoking line \ref{code:rands}, i.e., without needing ``undirected'' search.

This is the analogous argument of convergence of classical best-response dynamics in congestion games against a NE  \citep{roughgarden2016twenty}: also in our case the $\Phi$ is guaranteed to strictly increase (with some constant lower bound on each decrease since the game is finite) until it reaches a ``local'' minimum, since we always let at least one customer improve. Therefore we will never reach the same action profile again, i.e., never invoke lines \ref{code:rands} to \ref{code:rande}. 

And due to the assumed strictness of the NE, we will stay at a NE once it was announced. Then apply Corollary \extref{cor:at}.

\end{proof}

\stress{Comment.} Note that we assume $\bar{A}$ to be given. This is to be more modular and better express the algorithm which is on the proof-of-concept level. In principle $\bar{A}$ can be inferred from data as well.

\clearpage

\bibliographystyle{abbrvnat}
\bibliography{incl/cafco,incl/phil_master}

\end{document}